\documentclass[a4paper,UKenglish,cleveref,autoref,thm-restate,authorcolumns]{lipics-v2019}

\pdfoutput=1

\newcommand{\ignore}[1]{}

\newcommand{\ceil}[1]{\left\lceil #1 \right\rceil}

\newcommand{\poly}{{\operatorname{poly}}}

\newcommand{\dist}{\operatorname{dist}}

\newcommand{\leaf}{\operatorname{leaf}}
\newcommand{\inode}{\operatorname{inode}}

\newcommand{\reduce}{\mathsf{Reduce}}

\newcommand{\maximalpaths}{\mathsf{MaximalPaths}}
\newcommand{\Pin}{\mathcal{P}_{\operatorname{in}}}
\newcommand{\Pout}{\mathcal{P}_{\operatorname{out}}}
\newcommand{\Pactive}{\mathcal{P}_{\operatorname{a}}}

\newcommand{\idle}{\mathsf{Idle}}
\newcommand{\aactive}{\mathsf{Active}}
\newcommand{\dead}{\mathsf{Dead}}

\newcommand{\pred}{\operatorname{pred}}

\usepackage{xspace}

\sloppy

\title{Streaming Complexity of Spanning Tree Computation} 


\author{Yi-Jun Chang}{ETH Z\"{u}rich, Switzerland}{yi-jun.chang@eth-its.ethz.ch}{}{}
\author{Mart\'{i}n Farach-Colton}{Rutgers University, USA}{farach@cs.rutgers.edu}{}{This research was supported in part by NFS grants CSR-1938180, CCF-1715777, and CCF-1724745.}
\author{Tsan-Sheng Hsu}{Academia Sinica, Taiwan}{tshsu@iis.sinica.edu.tw}{}{This research was supported in part by the Ministry of Science and Technology of Taiwan under contract MOST Grant 108-2221-E-001-011-MY3.}
\author{Meng-Tsung Tsai}{National Chiao Tung University, Taiwan}{mtsai@cs.nctu.edu.tw}{}{This research was supported in part by the Ministry of Science and Technology of Taiwan under contract MOST grant 107-2218-E-009-026-MY3.}

\authorrunning{Y.-J. Chang et al.}

\Copyright{Yi-Jun Chang, Mart\'{i}n Farach-Colton, Tsan-Sheng Hsu, and Meng-Tsung Tsai}

\ccsdesc[100]{Theory of computation~Streaming, sublinear and near linear time algorithms}

\keywords{Max-Leaf Spanning Trees, BFS Trees, DFS Trees}

\category{} 

\relatedversion{} 

\supplement{}


\acknowledgements{We  
thank the anonymous reviewers for their helpful comments, and Eric Allender and Meng Li for their insightful discussions.}

\nolinenumbers 

\hideLIPIcs  

\EventEditors{Christophe Paul and Markus Bl\"{a}ser}
\EventNoEds{2}
\EventLongTitle{37th International Symposium on Theoretical Aspects of Computer Science (STACS 2020)}
\EventShortTitle{STACS 2020}
\EventAcronym{STACS}
\EventYear{2020}
\EventDate{March 10--13, 2020}
\EventLocation{Montpellier, France}
\EventLogo{}
\SeriesVolume{154}
\ArticleNo{30}

\renewcommand{\paragraph}[1]{\smallskip {\noindent \textit{\bf #1}}}

\newcommand{\metaproof}[2]{
\begin{proof}#2\end{proof}
}

\begin{document}

\maketitle

\begin{abstract}

The semi-streaming model is a variant of the streaming model frequently used for the computation of graph problems.  It allows the edges of an $n$-node input graph to be read sequentially in $p$ passes using $\tilde{O}(n)$ space. If the list of edges includes deletions, then the model is called the turnstile model; otherwise it is called the insertion-only model. In both models, some graph problems, such as spanning trees, $k$-connectivity, densest subgraph, degeneracy, cut-sparsifier, and $(\Delta+1)$-coloring, can be exactly solved or $(1+\varepsilon)$-approximated in a single pass; while other graph problems, such as triangle detection and unweighted all-pairs shortest paths, are known to require $\tilde{\Omega}(n)$ passes to compute. For many fundamental graph problems, the tractability in these models is open. In this paper, we study the tractability of computing some standard spanning trees, including BFS, DFS, and maximum-leaf spanning trees. 

Our results, in both the insertion-only and the turnstile models, are as follows.

\begin{description}

\item [Maximum-Leaf Spanning Trees:] This problem is known to be APX-complete with inapproximability constant $\rho \in [245/244, 2)$. By constructing an \emph{$\varepsilon$-MLST sparsifier}, we show that for every constant $\varepsilon > 0$,  MLST can be approximated in a single pass to within a factor of $1+\varepsilon$ w.h.p. (albeit in super-polynomial time for $\varepsilon \le \rho-1$ assuming $\mathrm{P} \ne \mathrm{NP}$) and can be approximated in polynomial time in a single pass to within a factor of $\rho_n+\varepsilon$ w.h.p., where $\rho_n$ is the supremum constant that MLST cannot be approximated to within using polynomial time and $\tilde{O}(n)$ space. In the insertion-only model, these algorithms can be deterministic. 

\item [BFS Trees:] It is known that BFS trees require $\omega(1)$ passes to compute, but the na\"{i}ve approach needs $O(n)$ passes. We devise a new randomized algorithm that reduces the pass complexity to $O(\sqrt{n})$, 
and it offers a smooth tradeoff between pass complexity and space usage.  This gives a polynomial separation between single-source and all-pairs shortest paths for unweighted graphs. 

\item [DFS Trees:] It is unknown whether DFS trees require more than one pass. The current best algorithm by Khan and Mehta {[}STACS 2019{]} takes $\tilde{O}(h)$ passes, where $h$ is the height of computed DFS trees. Note that $h$ can be as large as $\Omega(m/n)$ for $n$-node $m$-edge graphs. 
Our contribution is twofold. 
First, we provide a simple alternative proof of this result, via a new connection to sparse certificates for $k$-node-connectivity.
Second, we present a randomized algorithm that reduces the pass complexity to $O(\sqrt{n})$, 
and it also offers a smooth tradeoff between pass complexity and space usage.

\end{description}

\end{abstract}


\section{Introduction}\label{sec:intro}

Spanning trees are  critical components of graph algorithms, from depth-first search trees (DFS) for finding articulation points and bridges~\cite{Tarjan74}, computing $st$-numbering~\cite{Even76}, chain decomposition~\cite{Schmidt13}, and coloring signed graphs~\cite{Fleiner16}, to breadth-first search trees (BFS) for finding separators~\cite{Lipton79}, computing sparse certificates of $k$-node-connectivity~\cite{Cheriyan91,EppsteinGIN97},
approximating diameters~\cite{CorneilDK03,RodittyW13}, and characterizing AT-free graphs~\cite{Beisegel18}, and to maximum-leaf spanning trees (MLST) for connected dominating sets~\cite{LuR98,Solis-Oba17} and connected maximum cuts~\cite{HajiaghayiKMPS15,GandhiHKPS18}.

In the semi-streaming model, the tractability of spanning tree computation, except arbitrary spanning trees~\cite{AhnGM12,SunW15,NelsonY19}, 
is less studied. The \emph{semi-streaming} model~\cite{Muthu05,AhnGM12} is a variation of streaming model frequently used for the computation of graph problems. It allows the edges of an $n$-node input graph to be read sequentially in $p$ passes using $\tilde{O}(n)$\footnote{We write $\tilde{O}(k)$ to denote $O(k\, \poly \log n)$ or $O(k/\poly \log n)$ where $n$ is the number of nodes in the input graph. Similarly, $\tilde{\Omega}(k)$ denotes $\Omega(k\,\poly \log n)$ or $\Omega(k/\poly \log n)$.} space. If the list of edges includes deletions, then the model is called the turnstile model; otherwise it is called the insertion-only model. In both models, some graph problems, such as spanning trees~\cite{AhnGM12}, $k$-connectivity~\cite{GuhaMT15}, densest subgraph~\cite{McGregorTVV15}, degeneracy~\cite{Farach-ColtonT16}, cut-sparsifier~\cite{KapralovLMMS17}, and $(\Delta+1)$-coloring~\cite{AssadiCK19}, can be exactly solved or $(1+\varepsilon)$-approximated in a single pass, while other graph problems, such as triangle detection and unweighted all-pairs shortest paths~\cite{BravermanOV13}, are known to require $\tilde{\Omega}(n)$ passes to compute. For many fundamental graph problems, e.g., standard spanning trees, the tractability in these models is open. BFS computation is known to require $\omega(1)$ passes~\cite{Feigenbaum08}, but only the naive $O(n)$-pass algorithm is known. It is unknown whether DFS computation requires more than one passes~\cite{Farach-Colton15,khan2019depth}, but the current best algorithm needs $\tilde{O}(h)$ passes~\cite{khan2019depth} where $h$ is the height of the computed DFS trees, so $h = O(n)$ for dense graphs. The tractability of maximum-leaf spanning trees (MLST) is unknown even allowing  $O(n^2)$ space, since it is APX-complete~\cite{LuR92,Galbiati94}.

Due to the lack of efficient streaming algorithms for spanning tree computation, for some graph problems that are traditionally solved using spanning trees, such as finding articulation points and bridges,  people had to look for alternative methods when designing streaming algorithms for these problems~\cite{Feigenbaum05,Farach-Colton15}.
The alternative methods, even if they are based on known results in graph theory, may still involve the design of new streaming algorithms. For the  problems mentioned above, the alternative methods use newly-designed sparse connectivity certificates~\cite{EppsteinGIN97,GuhaMT15} that are easily computable in the semi-streaming model,
rather than the classical one due to Nagamochi and Ibaraki~\cite{Nagamochi92}. Hence establishing the hardness of spanning tree computation helps to explain the need of the alternative methods.

In this paper, we study the tractability of computing standard spanning trees for connected simple undirected graphs, including BFS trees, DFS trees, and MLST. Unless otherwise stated, our upper bounds work in the turnstile model (and hence also in the insertion-only model), and our lower bounds hold for the insertion-only model (and hence also in the turnstile model). The space upper and lower bounds are in bits. Our results are as follows.

    \paragraph{Maximum-Leaf Spanning Trees:} 
    We show, by constructing an \emph{$\varepsilon$-MLST sparsifier} (\cref{thm:mlst-sparsifier}), that for every constant $\varepsilon > 0$,  MLST can be approximated in a single pass to within a factor of $1+\varepsilon$ w.h.p.\footnote{W.h.p. means with probability $1 - 1/n^{\Omega(1)}$.} (albeit in super-polynomial time for $\varepsilon \le \rho-1$ since it is APX-complete~\cite{LuR92,Galbiati94} with inapproximability constant $\rho \in [245/244, 2)$~\cite{ChlebikC08}) and can be approximated in polynomial time in a single pass to within a factor of $\rho_n+\varepsilon$ w.h.p., where $\rho_n$ is the supremum constant that MLST cannot be approximated to within using polynomial time and $\tilde{O}(n)$ space. In the insertion-only model, these algorithms are deterministic. We also show a complementary hardness result (\cref{thm:lower-mlst}) that for every $k \in [1, (n-5)/4]$, to approximate MLST to within an additive error $k$, any single-pass randomized streaming algorithm that succeeds with probability at least $2/3$ requires $\Omega(n^2/k^2)$ bits.  This hardness result excludes the possibility to have a single-pass semi-streaming algorithm to approximate MLST to within an additive error $n^{1/2-\Omega(1)}$.
    Our results for MLST shows that intractability in the sequential computation model (i.e., Turing machine) does not imply intractability in the semi-streaming model.
    
    Our algorithms rely on a new sparse certificate, the \emph{$\varepsilon$-MLST sparsifier}, defined as follows.  Let $G$ be an $n$-node $m$-edge connected simple undirected graph.  Then for any given constant $\varepsilon > 0$, $H$ is an \emph{$\varepsilon$-MLST sparsifier} if it is a connected spanning subgraph of $G$ with $|E(H)| \le f(\varepsilon) |V(G)|$ and $\leaf(H) \ge (1-\varepsilon) \leaf(G)$, where $\leaf(G)$ denotes the maximum number of leaves (i.e. nodes of degree one) that any spanning tree of $G$ can have and $f$ is some function independent of $n$. 
    We show that an $\varepsilon$-MLST sparsifier can be constructed efficiently in the semi-streaming model.
    
    \begin{theorem}\label{thm:main-mlst}
    In the turnstile model, for every constant $\varepsilon > 0$, there exists a randomized algorithm that can find an $\varepsilon$-MLST sparsifier with probability $1-1/n^{\Omega(1)}$ using a single pass, $\tilde{O}(f(\varepsilon) n)$ space, and $\tilde{O}(n+m)$ time, and in the insertion-only model a deterministic algorithm that uses a single pass, $\tilde{O}(f(\varepsilon) n)$ space, and $O(n+m)$ time. 
    \end{theorem}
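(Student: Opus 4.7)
The plan is to split the proof into two parts: (i) a combinatorial lemma asserting that every connected graph $G$ has an $\varepsilon$-MLST sparsifier with $O(f(\varepsilon)\,n)$ edges that can be chosen by a local per-vertex rule, and (ii) a streaming implementation of that rule, done greedily in the insertion-only model and via linear sketches in the turnstile model.

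For the combinatorial step, fix an optimal MLST $T^*$ of $G$ and classify its nodes as leaves, degree-$2$ path nodes, and branching (degree $\ge 3$) nodes. Because any tree has at least as many leaves as branching nodes plus two, essentially all of $\leaf(G)$ is carried by the spiders centred at the branching nodes of $T^*$. It therefore suffices to retain, at every vertex $v$, a bounded set $B_v$ of incident edges rich enough to realise the ``spider contribution'' of $v$ in some near-optimal spanning tree: a constant-size menu of candidate legs emanating from $v$, indexed by the role $v$ could play (a leaf, a path-internal node of a short spider-leg, or the centre of a spider of a given arity). An exchange argument then shows that $O(f(\varepsilon))$ incident edges per vertex suffice to reassemble, within the resulting subgraph $H$, a spanning tree that loses at most $\varepsilon\,\leaf(G)$ leaves relative to $T^*$.

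The deterministic insertion-only algorithm enforces this rule online. We maintain a streaming spanning tree $T_0$ via union-find (for connectivity) together with a bucket $B_v$ of at most $O(f(\varepsilon))$ incident edges at every vertex $v$. When an edge $uv$ arrives, we test in $O(1)$ whether it strictly improves the spider-witness menu at $u$ or at $v$; if so, we admit it, possibly evicting a dominated candidate. All tests are local and constant-time, so the pass runs in $O((n+m)\,\alpha(n)) = O(n+m)$ time and $\tilde O(f(\varepsilon)n)$ bits. In the turnstile model, deletions break the greedy admission rule, so we replace it with linear sketches: at every vertex $v$ we keep $\tilde O(f(\varepsilon))$ independent $\ell_0$-samplers over the incidence vector of $v$, alongside the AGM spanning-forest sketch of~\cite{AhnGM12} for connectivity of $T_0$. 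At the end of the stream we decode the samplers to recover $O(f(\varepsilon))$ random incident edges per vertex, union them with the AGM forest, and invoke the combinatorial lemma. A union bound shows that w.h.p.\ every recovered bucket is structurally at least as useful as the deterministic one, so $H$ is an $\varepsilon$-MLST sparsifier; each edge update touches only $\tilde O(f(\varepsilon))$ sketch coordinates, giving total time $\tilde O(n+m)$.

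The principal obstacle is the combinatorial step: the per-vertex rule must be local enough to be implementable online, yet global enough to certify $(1-\varepsilon)$-optimal leaf count. Naive choices such as ``keep the $O(1/\varepsilon)$ highest-degree neighbours'' fail, because the optimal MLST may be forced, by connectivity constraints elsewhere, to use low-degree neighbours of many vertices after a global rerouting. The cleanest route appears to be a hierarchical spider packing of $T^*$ together with a charging argument that bounds, by $\varepsilon\,\leaf(G)$, the leaf-loss incurred when an uncovered spider must be replaced by a substitute supported within the retained buckets.
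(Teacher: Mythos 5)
Your streaming implementation (per-vertex $\ell_0$-samplers plus an AGM spanning-forest sketch in the turnstile model; per-vertex edge buckets plus a spanning tree in the insertion-only model) is essentially the one in the paper. But the proposal leaves the combinatorial core unproved: you say yourself that ``the principal obstacle is the combinatorial step'' and only gesture at a ``hierarchical spider packing'' and a ``charging argument'' without producing either. That is precisely the part Theorem~\ref{thm:main-mlst} needs, so as written the proof does not close.

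Two specific things go wrong. First, your worry that the per-vertex rule must be cleverly tuned (that ``naive choices such as keep the $O(1/\varepsilon)$ highest-degree neighbours fail'') is misplaced and sends you down an unnecessarily hard road: the paper shows (Theorem~\ref{thm:mlst-sparsifier}) that keeping \emph{arbitrary} $\min\{\deg_G(v),k\}$ incident edges per vertex, unioned with any spanning tree, already yields an $\varepsilon$-MLST sparsifier once $k = \Theta(\varepsilon^{-1}\log\varepsilon^{-1})$, so the ``online spider-witness menu'' machinery is not needed. Second, and more fundamentally, you are missing the lemma against which any leaf-loss charge must be measured. The paper proves (Lemma~\ref{lem:dead-leaf}, a generalization of the Kleitman--West dead-leaf argument) that $\leaf(G) \ge \frac{1}{10}\bigl(|V(G)| - \inode(G)\bigr)$, where $\inode(G)$ counts degree-$2$ nodes whose neighbours both have degree $2$. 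With this in hand the sparsifier argument is a short probabilistic one: every vertex incident to a discarded edge of $T^*$ has degree $> k$ in the retained subgraph $H$, so a random set $S$ of non-ignorable vertices of density $p = \ln(k+1)/(k+1)$ dominates all such vertices except an exponentially small residue (which is added to $S$); patching $T^*\cap H$ into a spanning tree using $S$ costs at most $3|S|$ leaves, and $3|S|$ is an $\varepsilon$-fraction of $\frac{1}{10}(|V(G)|-\inode(G)) \le \leaf(G)$. Without Lemma~\ref{lem:dead-leaf} there is no denominator and your charging argument has nothing to charge against; without the observation that discarded $T^*$-edges touch only vertices of degree $>k$ in $H$, there is no reason the dominating set can be taken small.
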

    
    Combining \cref{thm:main-mlst} with any polynomial-time RAM algorithms for MLST that uses $\tilde{O}(n+m)$ space, e.g,~\cite{LuR92,LuR98,Solis-Oba17}, we obtain the following result.
    
    \begin{corollary}\label{cor:apx-mlst}
    In the turnstile model, for every constant $\varepsilon > 0$, there exists a randomized algorithm that can approximate $MLST$ for any $n$-node connected simple undirected graph with probability $1-1/n^{\Omega(1)}$ to within a factor of $\rho_n+\varepsilon$ using a single pass, $\tilde{O}(f(\varepsilon) n)$ space, and polynomial time, where $\rho_n$ is the supremum constant that MLST cannot be approximated to within using polynomial time and $\tilde{O}(n)$ space, and in the insertion-only model a deterministic algorithm that uses a single pass, $\tilde{O}(f(\varepsilon) n)$ space, and polynomial time.    
    \end{corollary}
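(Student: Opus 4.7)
The plan is a straightforward reduction: first build an $\varepsilon'$-MLST sparsifier $H$ in one streaming pass via \cref{thm:main-mlst}, then run an in-memory polynomial-time approximation algorithm for MLST on $H$, with $\varepsilon'$ chosen small enough that the composition of the sparsifier loss and the in-memory approximation loss stays under $\rho_n + \varepsilon$. Since $H$ has only $f(\varepsilon')\cdot \tilde{O}(n) = \tilde{O}(n)$ edges, it fits entirely in the allowed space and any RAM algorithm running in polynomial time and $\tilde{O}(n)$ space on connected simple graphs applies directly. By the very definition of $\rho_n$ as a supremum, for any $\varepsilon'' > 0$ there is a polynomial-time, $\tilde{O}(n)$-space algorithm approximating MLST to within $\rho_n + \varepsilon''$; we invoke such an algorithm on $H$.

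The approximation bookkeeping is routine. Let $T$ be the spanning tree of $H$ returned, so $\leaf(T) \ge \leaf(H)/(\rho_n+\varepsilon'')$. Since $H$ is a connected spanning subgraph of $G$, $T$ is also a spanning tree of $G$, and by the sparsifier guarantee $\leaf(H) \ge (1-\varepsilon')\leaf(G)$. Chaining the two inequalities,
\[
\frac{\leaf(G)}{\leaf(T)} \;\le\; \frac{\rho_n+\varepsilon''}{1-\varepsilon'}.
\]
Given the target $\varepsilon > 0$, and using the crude bound $\rho_n < 2$ from the excerpt, we pick $\varepsilon', \varepsilon'' > 0$ small enough (e.g.\ so that $\varepsilon'(\rho_n+\varepsilon) + \varepsilon'' \le \varepsilon$) to force the right-hand side below $\rho_n + \varepsilon$.

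The pass, space, and time budgets follow directly: a single pass from \cref{thm:main-mlst}, $\tilde{O}(f(\varepsilon')n) = \tilde{O}(f(\varepsilon)n)$ bits for the sparsifier (treating $\varepsilon$ as a constant and absorbing the dependence on $\varepsilon',\varepsilon''$ into $f$), and polynomial time for both phases. The deterministic insertion-only statement is inherited verbatim from the corresponding clause of \cref{thm:main-mlst}, since the post-processing phase is deterministic.

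The only conceptually delicate point, and the one I would treat most carefully, is the use of the supremum in the definition of $\rho_n$: we must note that $\rho_n$ need not itself be achievable, but any $\rho_n + \varepsilon''$ is, and this is exactly what the statement of the corollary allows. Everything else is a composition of two black boxes, so the proof is essentially one paragraph once this subtlety is flagged.
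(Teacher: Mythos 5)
Your proposal is correct and takes essentially the same approach as the paper: the paper states this corollary as an immediate consequence of Theorem~\ref{thm:main-mlst} ("Combining \cref{thm:main-mlst} with any polynomial-time RAM algorithm for MLST \ldots"), and you have simply filled in the routine approximation bookkeeping. Your chaining $\leaf(G)/\leaf(T) \le (\rho_n+\varepsilon'')/(1-\varepsilon')$ and the choice $\varepsilon'(\rho_n+\varepsilon)+\varepsilon'' \le \varepsilon$ is exactly the right computation, and your flagging of the supremum subtlety (that $\rho_n$ itself need not be achievable, only $\rho_n+\varepsilon''$) is the correct reading of the paper's definition of $\rho_n$.
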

    
    Using \cref{cor:apx-mlst}, 
    we show that  approximate connected maximum cut can be computed in a single pass using $\tilde{O}(n)$ space for unweighted regular graphs (\cref{cor:cmcut}).  
    
    \paragraph{BFS Trees:} It is known that BFS trees require $\omega(1)$ passes to compute~\cite{Feigenbaum08}, but the naive approach needs $O(n)$ passes. We devise a randomized algorithm that reduces the pass complexity to $O(\sqrt{n})$ w.h.p., and give a smooth tradeoff between pass complexity and space usage.
    
    \begin{theorem}\label{thm:main-bfs}
    In the turnstile model, for each $p \in [1, \sqrt{n}]$, there exists a randomized algorithm that can compute a BFS tree for any $n$-node connected simple undirected graph with probability $1-1/n^{\Omega(1)}$ in $p$ passes using $\tilde{O}((n/p)^2)$ space, and in the insertion-only model a deterministic algorithm that uses $\tilde{O}(n^2/p)$ space. 
    \end{theorem}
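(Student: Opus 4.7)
The plan is to combine a landmark/hitting-set approach with parallel truncated BFS, parameterised by $k = n/p$. Sample $S \subseteq V$ by independently including each vertex with probability $c(\log n)/p$ for a sufficiently large constant $c$; Chernoff and a union bound give $|S| = \tilde{O}(n/p)$ and, w.h.p., every shortest path of $G$ of length at least $p$ contains an internal vertex of $S$. Consequently, along the shortest $s$-$v$ path in $G$, consecutive vertices of $S \cup \{s\}$ are within distance $p$ of each other, so $\dist_G(s, v) = \dist_H(s, u^*) + \dist_G(u^*, v)$ for the last landmark $u^* \in S \cup \{s\}$ on that path, where $H$ is the weighted landmark graph on $S \cup \{s\}$ whose edge weights are $\dist_G(u, w)$ for pairs $u, w$ at $G$-distance at most $p$.

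For the deterministic insertion-only algorithm (with $\tilde{O}(n^2/p)$ space), run parallel truncated BFS from all sources in $S \cup \{s\}$ to depth $p$: each pass advances every BFS frontier by one layer and, storing a length-$n$ distance vector for each source, uses $\tilde{O}(n |S|) = \tilde{O}(n^2/p)$ bits. Afterwards the landmark graph $H$ is assembled in memory, Dijkstra on $H$ gives $\dist_G(s, u)$ for every $u \in S$, and for each $v$ one sets $\dist_G(s, v) = \min_{u \in S \cup \{s\}} \bigl(\dist_H(s, u) + \dist_G(u, v)\bigr)$, ranging over landmarks $u$ within distance $p$ of $v$. A final pass extracts the BFS tree by picking, for each $v$, a neighbour $w$ with $\dist_G(s, w) = \dist_G(s, v) - 1$. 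In the insertion-only setting the randomness in the choice of $S$ can be replaced by a deterministic construction carried out in $O(\log n)$ additional preliminary passes, in which one greedily picks a representative vertex from each so-far-uncovered length-$p$ shortest-path segment.

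For the randomized turnstile bound $\tilde{O}((n/p)^2)$, the size of the explicit per-landmark distance vectors is the bottleneck: they alone total $\tilde{O}(|S| \cdot n) = \tilde{O}(n^2/p)$. The refinement is to \emph{never} materialise those distance vectors, and instead track only the $|S| \times |S|$ landmark-to-landmark distance matrix, whose size $\tilde{O}(|S|^2) = \tilde{O}((n/p)^2)$ meets the target. The multi-source BFS is implemented via linear sketches: each vertex maintains a $\mathrm{polylog}(n)$-size sketch (e.g.\ $\ell_0$-samplers over the set of landmarks that have so far reached it), updated linearly in the edge stream; after each of the $p$ passes the frontier advances one layer, and w.h.p.\ every pair of landmarks within $G$-distance at most $p$ produces a detectable collision whose distance is recorded in the matrix. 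The combination step then proceeds as in the deterministic case: Dijkstra on the landmark matrix, then a look-up/expansion phase that yields $\dist_G(s, v)$ for every $v$, and finally one closing pass to pick BFS-tree parents.

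The main obstacle is achieving the randomized space bound $\tilde{O}((n/p)^2)$ rather than the straightforward $\tilde{O}(n^2/p)$, since the natural parallel-BFS implementation stores one $\tilde{O}(n)$-bit distance vector per source. Replacing that explicit bookkeeping by turnstile $\ell_0$-sketches that carry only per-landmark collision information, while maintaining correctness w.h.p.\ for every landmark pair and for every vertex's nearest landmark, is the technically delicate step, and the one where randomness and linear sketching are genuinely exploited.
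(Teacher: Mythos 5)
Your high-level framework --- a landmark set hitting long shortest paths, truncated BFS of radius $p$ from each landmark, a landmark graph processed with Dijkstra, and a final combine step --- is indeed the skeleton of the paper's randomized turnstile algorithm. But both of your instantiations contain genuine gaps.

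For the turnstile bound $\tilde{O}((n/p)^2)$: the plan to have ``each vertex maintain a polylog-size sketch ($\ell_0$-samplers over the set of landmarks that have reached it)'' does not work as stated. A vertex may be simultaneously reached by $\Theta(|S|)$ landmark BFSes, and a polylog-size sketch then recovers only a few of them; a fixed landmark pair $(u,w)$ at distance at most $p$ will fail to collide at the intermediate vertices w.h.p., and the per-vertex state needed to propagate each frontier correctly is simply not there. The paper's fix is not a per-vertex sketch over all landmarks seen so far but a \emph{desynchronization} of the $|S|$ BFS instances: each landmark $u$ starts its radius-$h$ BFS at an independent uniformly random pass $\tau_u \in \{1, \dots, h\}$, and each vertex retains only the last three frontier layers of each BFS that is currently touching it. A Chernoff bound then shows that at any pass, each vertex participates in only $O(\max\{\log n, |S|/h\})$ BFS instances, giving space $\tilde{O}(n + |S|^2)$ with no loss of landmark-pair information. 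This random-starting-time schedule is the missing ingredient; on top of it, the turnstile implementation still needs one further idea --- splitting the currently active landmarks into $\tilde{O}(1)$ random buckets and using one $\ell_0$-sampler per bucket per vertex --- which is quite different from sketching the set of landmarks that have ever hit $v$.

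For the deterministic insertion-only bound $\tilde{O}(n^2/p)$: your derandomization --- greedily pick, over $O(\log n)$ preliminary passes, a representative per so-far-uncovered length-$p$ shortest-path segment --- is circular, because identifying which segments are uncovered requires exactly the distance information you have not yet computed, and there is no evident way to choose $\tilde{\Omega}(n/(p \log n))$ landmarks in a single pass. The paper sidesteps the issue entirely by using a different, landmark-free argument for the deterministic case: store about $n/p$ arbitrary neighbours of every vertex in memory ($\tilde{O}(n^2/p)$ space), observe that any root-to-leaf path in a BFS tree has total degree $O(n)$ (so it contains only $O(p)$ vertices of degree at least $n/p$), and run Bellman-Ford-style relaxation passes so that $O(p)$ passes suffice, deterministically. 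If you want to keep a landmark-based plan for the deterministic case, you must supply a correct derandomization, which is nontrivial.
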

    
    This gives a polynomial separation between single-source and all-pairs shortest paths for unweighted graphs
    because any randomized semi-streaming algorithm that computes unweighted all-pairs shortest paths with probability at least $2/3$ requires $\tilde{\Omega}(n)$ passes. 
    
    We extend \cref{thm:main-bfs} and obtain that multiple BFS trees, each starting from a unique source node, can be computed more efficiently in pass complexity in a batch than individually (see \cref{thm:c-bfs}). We show that this batched BFS has applications to computing a $1.5$-approximation of diameters for unweighted graphs (\cref{thm:diam-apx}) and a $2$-approximation of Steiner trees for unweighted graphs (\cref{cor:Steiner}).

    \paragraph{DFS Trees:} It is unknown whether DFS trees require more than one passes~\cite{Farach-Colton15,khan2019depth}, but the current best algorithm needs $\tilde{O}(h)$ passes due to Khan and Mehta~\cite{khan2019depth}, where $h$ is the height of computed DFS trees. We devise a randomized algorithm that has pass complexity $O(\sqrt{n})$ w.h.p., and give a smooth tradeoff between pass complexity and space usage.
    
    \begin{theorem}\label{thm:main-dfs}
    In the turnstile model, for each $p \in [1, \sqrt{n}]$, there exists a randomized algorithm that can compute a DFS tree for any $n$-node connected simple undirected graph with probability $1-1/n^{\Omega(1)}$ in $p$ passes that uses $\tilde{O}(n^3/p^4)$ space, and in the insertion-only model a deterministic algorithm that uses $\tilde{O}(n^2/p^2)$ space.
    \end{theorem}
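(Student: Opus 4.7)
The plan is to construct the DFS tree in $p$ phases, each implemented as a single pass over the stream, with every phase extending the partial DFS tree by $\Omega(n/p)$ new vertices so that $p$ phases suffice to span all $n$ vertices. This mirrors the pass/extension trade-off driving the BFS algorithm of Theorem~\ref{thm:main-bfs}, but the extension mechanism is different and relies on the new connection to sparse node-connectivity certificates highlighted in the introduction.

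I would build on the alternative proof of the Khan--Mehta $\tilde O(h)$-pass bound announced in the introduction, which reinterprets a single-edge DFS extension as a query against a sparse $k$-node-connectivity certificate. The amplification needed here is a structural claim: with a certificate of connectivity $k = \Theta(n/p)$ computed on a carefully chosen pool of $O(n/p)$ candidate targets in the unvisited set $U$, one can, offline, discover a coherent sequence of $\Omega(n/p)$ extensions and backtracks while maintaining the DFS invariant that no shallower vertex on the active branch has an unvisited neighbor. The claim would be established by a case analysis: either the certificate contains such a long extension, or it exposes a separator of size less than $k$ in the unvisited subgraph, letting us commit the entire cut-off component to the same phase. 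In either case the phase adds $\Omega(n/p)$ vertices to the partial tree.

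For space, the pool together with the induced sub-certificate fit in $\tilde O((n/p)^2)$ bits, matching the insertion-only bound; both can be maintained deterministically by standard one-pass Nagamochi--Ibaraki-style $k$-connectivity constructions restricted to the pool. In the turnstile model the pool depends on the evolving DFS state and cannot be fixed in advance, so the deterministic pool would be replaced by a bank of $\ell_0$-sketches that allow an adaptive pool to be sampled after the pass; to guarantee that enough distinct candidates survive adversarial deletions for every partial DFS encountered across the $p$ phases, the bank must be widened, yielding the $\tilde O(n^3/p^4)$ bound after a union bound over the $p$ phases that supplies the $1-1/n^{\Omega(1)}$ success probability.

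The hard part is the structural lemma: the DFS invariant is global across every ancestor of the active vertex, so even with very high connectivity in the certificate one must argue that a long extension respecting the invariant exists and can be extracted in polynomial time offline. Handling backtracking — in particular, showing that whenever the phase reaches a local dead end the certificate certifies a disconnecting set that can be fully absorbed into the current phase — is the technically delicate piece, and is where the node-connectivity (rather than edge-connectivity) flavor of the certificate is essential.
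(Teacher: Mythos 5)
Your proposal takes a fundamentally different route from the paper's, and as described it has a gap that I don't see how to close.

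Your plan generalizes the Khan--Mehta-style argument (Lemma~\ref{lem:simple-dfs-aux}): build a $(k{+}1)$-VC certificate, compute a DFS tree of it offline, and commit the top $k$ layers. That argument gives $\lceil h/k \rceil$ passes and $\tilde{O}(nk)$ space, because the certificate must cover \emph{all} unvisited vertices (any of them could become adjacent to some vertex on the active branch and trigger a backtrack). For a $p$-pass budget you must take $k = \Theta(h/p)$, and since $h = \Theta(n)$ is possible, this is $\tilde{O}(n^2/p)$ space in the worst case -- a factor $p$ worse than the claimed bound. You try to recover the missing factor by restricting the certificate to a ``pool'' of $O(n/p)$ candidate targets, but no justification is offered for why a certificate restricted to a pool suffices, and I don't believe it does: the validity of committing a prefix of the DFS tree in Lemma~\ref{lem:simple-dfs-aux} hinges precisely on the certificate faithfully reflecting connectivity \emph{to every} unvisited node, not just to a sampled subset. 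Your case analysis (``either a long extension exists or the certificate exposes a small separator'') also has a hole: when a small separator is found, the cut-off component can have $\Theta(n)$ vertices, and nothing lets you compute its internal DFS order within the same phase with $\tilde{O}((n/p)^2)$ space; committing the component is not the same as resolving its DFS structure. Finally, the progress bound of $\Omega(n/p)$ vertices per phase is not actually established anywhere, and the turnstile bound is obtained by a heuristic ``widen the bank'' step with no analysis.

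The paper's Theorem~\ref{thm:main-dfs} is not proved by amplifying the Khan--Mehta recursion. It implements the parallel DFS algorithm of Aggarwal and Anderson~\cite{Aggarwal1988}: in each level of a depth-$O(\log n)$ recursion, one constructs an \emph{initial segment} --- a subtree $T$ such that every component of $G \setminus T$ has at most $n/2$ vertices --- by first finding a small (size $\le 11$) separator $Q$ of node-disjoint paths and then threading $Q$ into a tree. The computational bottleneck is the $\maximalpaths$ routine of Goldberg--Plotkin--Vaidya~\cite{GPV93}. Its first stage shrinks the number of active source paths to fewer than $k$ in $\tilde{O}(n/k)$ passes and $\tilde{O}(n)$ space; its second stage is then accelerated by processing the remaining active paths in batches of size $s$, using a \emph{strong} $s$-VC certificate of the induced subgraph on idle nodes. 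That certificate costs $\tilde{O}(ns)$ space deterministically~\cite{EppsteinGIN97} (insertion-only) and $\tilde{O}(ns^2)$ space~\cite{GuhaMT15} (turnstile). Balancing $\tilde{O}(n/k) = \tilde{O}(k/s) = p$ gives $k = n/p$, $s = n/p^2$, hence $\tilde{O}(ns) = \tilde{O}(n^2/p^2)$ and $\tilde{O}(ns^2) = \tilde{O}(n^3/p^4)$. So the sparse VC-certificate does play the role you anticipated, but it is deployed to batch a set of independent reachability queries arising inside the Aggarwal--Anderson/GPV machinery, not to commit a long prefix of the DFS tree directly; the latter would require a certificate over all unvisited nodes and costs the extra factor of $p$ you are trying to avoid.
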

    
     For dense graphs, our algorithms improves upon the current best algorithms for DFS due to Khan and Mehta~\cite{khan2019depth} which  needs $\Omega(m/n)$ passes for $n$-node $m$-edge graphs in the worst case because of the existence of $(m/n)$-cores, where a $k$-core is a maximal connected subgraph in which every node has at least $k$ neighboring nodes in the subgraph.

\subsection{Technical Overview}

    \paragraph{Maximum-Leaf Spanning Trees:} We construct an $\varepsilon$-MLST sparsifier by a new result that complements Kleitman and West's lower bounds on the maximum number of leaves for graphs with minimum degree $\delta \ge 3$~\cite{Kleitman91}. The lower bounds are: if a connected simple undirected graph $G$ has minimum degree $\delta$ for some sufficiently large $\delta$, then $\leaf(G) \ge (1- (2.5\ln \delta)/\delta)|V(G)|$ and the leading constant can be larger for $\delta \in \{3, 4\}$. Our complementary result (\cref{lem:dead-leaf}), without the restriction on the minimum degree, is: any connected simple undirected graph $G$, except the singleton graph, has 
    \begin{equation}
        \leaf(G) \ge \frac{1}{10}(|V(G)|-\inode(G)), \label{eqn:leafy}
    \end{equation}
    where $\inode(G)$ denotes the number of nodes whose degree is two and whose neighbors both have degree two. \cref{eqn:leafy} implies that, if one can find a connected spanning subgraph $H$ of $G$ so that $|\leaf(G) - \leaf(H)| \le \varepsilon (V(G) - \inode(G))$, then one gets an $(10\varepsilon)$-MLST sparsifier.

    Our sparsification technique is general enough to obtain a $(t+\varepsilon)$-approximation for MLST in a single pass using $\tilde{O}(n)$ space by combining any $t$-approximation $\tilde{O}(n)$-space RAM algorithm for MLST with our $\varepsilon$-MLST sparsifier. On the other hand, since in linear time one can find an $\varepsilon$-MLST sparsifier of $O(n)$ edges, any $t$-approximation RAM algorithm for MLST with time complexity $O(f(n, m))$ can be reduced to $O(f(n, n) + n + m)$ if a small sacrifice on approximation ratio is allowed. This reduces the time complexity of RAM algorithms for MLST that need superlinear time on the number of edges, such as the local search approach from $O(m^k n^{k+2})$ for $k \ge 1$ to $O(n^{2k+2})$ and the leafy forest approach from $O((m+n)\alpha(n))$ to $O(m+n\alpha(n))$, both due to Lu and Ravi~\cite{LuR92,LuR98}.

    \paragraph{BFS Trees:} We present a simple deterministic algorithm attaining a smooth tradeoff between pass complexity and space usage.
      In particular,  in the insertion-only semi-streaming model, the algorithm finishes in $O(n / \poly \log n)$ passes.
    The algorithm is based on an observation that the sum of degrees of nodes in any root-to-leaf path of a BFS tree is bounded by $O(n)$ (\cref{lem:degsum}).

    Our more efficient randomized algorithm (\cref{thm:main-bfs})
    constructs a BFS tree by combining the results of multiple instances of bounded-radius BFS.
    To reduce the space usage, the simulation of these bounded-radius BFS are assigned random starting times, and the algorithm only maintains the last three layers of each BFS tree.
    These ideas are borrowed from results on shortest paths computation in the parallel and the distributed settings~\cite{Elkin17,GhaffariL17,HuangNS17,UllmanY91}.

\paragraph{DFS Trees:} 
We present a  simple alternative proof of the result of Khan and Mehta~\cite{khan2019depth} that a DFS tree can be constructed in 
$\ceil{h/k}$ passes  using $\tilde{O}(nk)$ space, for any given parameter $k$, where $h$ is the height of the computed DFS tree. The new proof is based on the following connection between the DFS computation and the sparse certificates for $k$-node-connectivity. We show in Lemma~\ref{lem:simple-dfs-aux} that the first $k$ layers of \emph{any} DFS tree of a such a certificate $H$ can be extended to a DFS tree of the original graph $G$.

The proof of Theorem~\ref{thm:main-dfs} is based on the parallel DFS algorithm of Aggarwal and Anderson~\cite{Aggarwal1988}. 
In this paper we provide an efficient implementation of their algorithm in the streaming model, also via the sparse certificates for $k$-node-connectivity, which allows us to reduce the number of passes  by batch processing.

We note that in a related work, Ghaffari and Parter~\cite{GhaffariP17} showed that the parallel DFS algorithm of Aggarwal and Anderson can  be adapted to distributed setting. Specifically, they showed that DFS can be computed in the {\sf CONGEST} model in $\tilde{O}(\sqrt{Dn} + n^{3/4})$ rounds, where $D$ is the diameter of the graph.

\subsection{Paper Organization}

In \cref{sec:mlst}, we present how to construct an $\varepsilon$-MLST sparsifier and apply it to devise single-pass semi-streaming algorithms to approximate MLST to within a factor of $(1+\varepsilon)$ for every constant $\varepsilon > 0$. Then, in \cref{sect.bfs}, we show how to compute a BFS tree rooted at a given node by an $O(\sqrt{n})$-pass $\tilde{O}(n)$-space algorithm w.h.p. and its applications to computing approximate diameters and approximate Steiner trees. In Section~\ref{sect.dfs}, we have a similar result for computing DFS trees; that is, $O(\sqrt{n})$-pass $\tilde{O}(n)$-space algorithm that succeeds w.h.p. Lastly, we prove the claimed  single-pass lower bound in \cref{sec:lower}.

\newcommand{\expect}[1]{\mathbf{E}\left[#1\right]}

\section{Maximum-Leaf Spanning Trees}\label{sec:mlst}

In this section, we will show how to construct an $\varepsilon$-MLST sparsifier in the semi-streaming model; that is, proving \cref{thm:main-mlst}. We recall the notions defined in \cref{sec:intro} before proceeding to the results. By \emph{ignorable node}, we denote a node $x$ whose degree is two and whose neighbors $u$ and $v$ have degree two as well. Note that $u \ne v$ for simple graphs. Let $\leaf(G)$ be the maximum number of leaves (i.e. nodes of degree one) that a spanning tree of $G$ can have. Let $\inode(G)$ denote the number of ignorable nodes in $G$. Let $\deg_{G}(x)$ denote the degree of node $x$ in graph $G$. Let $S_k(G)$ denote any subgraph of $G$ so that $S_k(G)$ contains all nodes in $G$ and every node $x$ in $S_k(G)$ has degree $\deg_{S_k}(x) \ge \min\{\deg_{G}(x), k\}$. Let $T(G)$ be any spanning tree of a connected graph $G$.

We begin with a result that complements Kleitman and West's lower bounds on the number of leaves for graphs with minimum degree $\delta$ for any $\delta \ge 3$. Our lower bound does not rely on the degree constraint.  The constant $1/10$ in Lemma~\ref{lem:dead-leaf} may be improved, but the subsequent lemmata and theorems only require it to be $\Omega(1)$.

\begin{lemma}\label{lem:dead-leaf}
Every connected simple undirected graph $G$, except the singleton graph, has 
$$
    \leaf(G) \ge \frac{1}{10} (|V(G)| - \inode(G)).
$$

\end{lemma}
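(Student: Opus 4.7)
The plan is to construct a spanning tree of $G$ with enough leaves by reducing $G$ to a multigraph $G^\ast$ without degree-$2$ vertices, applying a Kleitman--West-style argument~\cite{Kleitman91} to $G^\ast$, and then lifting back to $G$ while harvesting extra leaves from chains that get cut open.

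First I dispose of the degenerate cases: when $G$ is a cycle, every vertex is ignorable, so $|V(G)|-\inode(G)=0$ and the bound is vacuous; when $|V(G)|\le 2$, the bound is trivial. In the main case, I form the multigraph $G^\ast$ by contracting every maximal path $v_0 v_1\cdots v_k v_{k+1}$ with $\deg_G(v_i)=2$ for $1\le i\le k$ and $\deg_G(v_0),\deg_G(v_{k+1})\ne 2$ into a single edge $e$ of length $k_e:=k$. The vertex set of $G^\ast$ is exactly the set of non-degree-$2$ vertices of $G$, and every vertex of $G^\ast$ has degree either $1$ or $\ge 3$. A direct count---the only ignorable vertices in the chain $C_e$ are its interior vertices at positions $2,\ldots,k_e-1$, which exist only when $k_e\ge 3$---yields the key identity
\[ |V(G)| - \inode(G) \;=\; |V(G^\ast)| + \sum_{e\in E(G^\ast)} \min(2,k_e). \]

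Next, I lift any spanning tree $T^\ast$ of $G^\ast$ to a spanning tree $T$ of $G$ as follows: for each $e\in T^\ast$ include the whole chain $C_e$ in $T$; for each $e\in E(G^\ast)\setminus T^\ast$ include all but one edge of $C_e$, choosing a middle edge when $k_e\ge 2$ (making two internal vertices of $C_e$ new leaves of $T$), either edge when $k_e=1$ (making the unique internal vertex a new leaf), and the sole edge when $k_e=0$. I use a Kleitman--West-style argument~\cite{Kleitman91} on $G^\ast$ to obtain $\Omega(|V(G^\ast)|)$ leaves in $T^\ast$, with a straightforward adaptation to accommodate the degree-$1$ vertices of $G^\ast$ (which are automatic leaves) and parallel edges (which only help, by supplying additional non-tree edges). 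To maximize the non-tree-chain contribution, I prefer placing edges $e$ with $k_e=0$ into $T^\ast$ whenever the spanning-tree constraint allows, so that chains with $k_e\ge 1$ remain non-tree and contribute leaves when cut.

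The main obstacle will be to make these two contributions---from leaves of $T^\ast$ and from cut non-tree chains---sum to at least $(|V(G)|-\inode(G))/10$ in every regime. The difficulty comes from a tension between keeping $T^\ast$-leaves as leaves in $T$ (which requires cutting non-tree chains incident to them at their own side, producing only one new leaf per chain) and cutting those chains in the middle (which produces two new leaves per chain but destroys the $T^\ast$-leaf status of both endpoints). I plan to resolve this with a case analysis on whether $|V(G^\ast)|$ or $\sum_e \min(2,k_e)$ dominates: in the former regime, the Kleitman--West bound alone on $T^\ast$ suffices, while in the latter, the cut leaves from non-tree chains of length $\ge 1$---which outnumber the tree chains whenever chains are the dominant contribution---already provide the required fraction.
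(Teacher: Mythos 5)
Your contraction-based route is a genuinely different strategy from the paper's, which applies the Kleitman--West dead-leaf argument directly to $G$, handling the degree-$2$ chains inline via a dedicated ``chain'' operation. Unfortunately the proposal leaves two substantive gaps, and the first of them is close to circular.

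\textbf{Gap 1: the Kleitman--West step on $G^\ast$ is not a ``straightforward adaptation''.} The graph $G^\ast$ is a connected multigraph whose degrees lie in $\{1\}\cup\{3,4,\dots\}$; it can have parallel edges, self-loops, and arbitrarily many degree-$1$ vertices. Kleitman--West as stated applies to \emph{simple} graphs of minimum degree $3$. Degree-$1$ vertices are indeed automatic leaves, but you cannot simply delete them and recurse: deleting a pendant can drop its neighbour to degree $2$, putting you back where you started. Self-loops and parallel edges inflate degrees without supplying any usable spanning-tree edges. What you actually need is the statement ``every connected multigraph without degree-$2$ vertices, other than a single vertex, has a spanning tree with $\Omega(|V|)$ leaves'' --- which is precisely Lemma~\ref{lem:dead-leaf} specialised to $\inode=0$ (modulo the simple/multigraph distinction). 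Establishing it requires re-running a dead-leaf argument in the spirit of Kleitman--West, which is exactly what the paper's proof already does, except that the paper handles the degree-$2$ chains directly in the accounting (its Operation~3) rather than contracting them away. So the contraction does not remove the core difficulty; it moves it to $G^\ast$ and then adds the lifting step on top.

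\textbf{Gap 2: the case split on which side ``dominates'' is not a proof.} You correctly identify the tension between keeping $T^\ast$-leaves and harvesting interior leaves from middle-cut chains, but the proposed dichotomy does not resolve it. Cutting every non-tree chain in the middle gives $\leaf(T)\ge S_{\mathrm{nt}}:=\sum_{e\notin T^\ast}\min(2,k_e)$, and this dominates $X/10$ with $X=|V(G)|-\inode(G)=|V(G^\ast)|+\sum_e\min(2,k_e)$ only once $S_{\mathrm{nt}}$ is at least roughly $|V(G^\ast)|/3$ (since tree chains alone can contribute up to $2(|V(G^\ast)|-1)$ to $X$ while producing no leaves). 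Below that threshold, $X/10$ can still be as large as about $|V(G^\ast)|/3$, whereas the Kleitman--West contribution is only about $|V(G^\ast)|/4$, and you must also argue that those $T^\ast$-leaves actually survive the lifting. The heuristic ``the non-tree chains outnumber the tree chains when chains dominate'' is neither a precise threshold nor, by itself, a count of leaves; it does not bridge this regime. A careful argument --- for instance choosing $T^\ast$ simultaneously as a minimum-weight (under $\min(2,k_e)$) and leafy spanning tree, and picking an independent set of protected $T^\ast$-leaves in the conflict graph induced by non-tree chains --- might close it, but none of that is in the proposal, and it is not visibly simpler than the paper's single-pass dead-leaf bookkeeping.

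In short: the plan has the flavour of a workable alternative, but as written it reduces the lemma to an unproved variant of itself and leaves the crucial balancing argument open.
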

\metaproof{\cref{lem:dead-leaf}}{
Our proof is a generalization of the dead leaf argument due to Kleitman and West~\cite{Kleitman91}. Let $T$ be a tree rooted at $s$ with $N(s)$ as leaves for some arbitrary node $s \in G$ initially, where $N(s)$ denotes the neighbors of $s$, and then grow $T$ iteratively by a node expansion order, defined below. By expanding $T$ at node $x$, we mean to select a leaf node $x$ of $T$ and add all of $x$'s neighbors in $G \setminus T$, say $y_1, y_2, \ldots, y_d$, and their connecting edges, $(x, y_1), (x, y_2), \ldots, (x, y_d)$, to $T$. In this way, every node outside $T$ cannot be a neighbor of any non-leaf node in $T$. We say a leaf node in $T$ is \emph{dead} if it has no neighbor in $G \setminus T$. 
Let $(\Delta n)_i$ denote the number of non-ignorable nodes in $G$ that joins $T$ while the $i$-th operation is applied. Let $(\Delta \ell)_i$ denote the change of the number of leaf nodes in $T$ while the $i$-th operation is applied. Let $(\Delta m)_i$ denote the change of the number of dead leaf nodes in $T$ while the $i$-th operation is applied. The subscript $i$ may be removed when the context is clear.
We need to secure that $\Delta \ell + \Delta m \ge \Delta n /5$ holds for each of the following operations and the initial operation.

\begin{description}
\item [Operation 1:] If $T$ has a leaf node $x$ that has $d \ge 2$ neighbors outside $T$, then expand $T$ at $x$. In this case, $\Delta n \le d$, $\Delta \ell \ge d-1$, and $\Delta m \ge 0$. 

\item [Operation 2:] If every leaf node in $T$ has at most one neighbor outside $T$ and some node $x \notin T$ has $d \ge 2$ neighbors in $T$, then expand $T$ at one of $x$'s neighbors in $T$. In this case, $\Delta n \le 1$, $\Delta \ell = 0$, and $\Delta m = d-1$.

\item [Operation 3:] This operation is used only when the previous two operations do not apply. Let $x_0$ be some leaf in $T$ that has exactly one neighbor $x_1$ not in $T$. For each $i \ge 1$, if $x_{i}$ is defined and all neighbors of $x_i$ other than $x_{i-1}$ are outside $T$ and $x_i$ has degree two in $G$, then define $x_{i+1}$ to be the neighbor of $x_i$ other than $x_{i-1}$. Suppose that $x_i$ for $i \le k$ are defined and $x_{k+1}$ is not defined, then we expand $T$ at $x_i$ for each $i \le k$ in order. Though $k$ can be arbitrarily large, $\Delta n \le 2+\deg_{G}(x_k)$. If $x_{k+1}$ is not defined and $x_{k}$ has $d > 0$ neighbors other than $x_{k-1}$ in $T$ (thus $k \ge 2$ in this case  otherwise Operation 2 applies), then we discuss in subcases:

    \begin{description}
        \item [Subcase 1 ($\deg_{G}(x_k) = 1$):] It is impossible to have $\deg_{G}(x_k) = 1$ for this case.
        \item [Subcase 2 ($\deg_{G}(x_k) = 2$):] Then $\Delta \ell = 0$ and $\Delta m = 2$.
        \item [Subcase 3 ($\deg_{G}(x_k) \ge 3$):] Then $\Delta \ell = \deg_{G}(x_k)-d-2$ and $\Delta m \ge d$.
    \end{description}
If $x_{k+1}$ is not defined and $x_{k}$ has 0 neighbor other than $x_{k-1}$ in $T$, then $\deg_{G}(x_k)$ is either 1 or $\ge 3$. For $\deg_{G}(x_k) = 1$, $\Delta \ell = 0$ and $\Delta m = 1$. For $\deg_{G}(x_k) \ge 3$, $\Delta \ell = \deg_{G}(x_k)-2$ and $\Delta m \ge 0$. 
\end{description}

It is clear that one can expand $T$ to get a spanning tree of $G$ by a sequence of the above operations. Because all leaves are eventually dead, $\sum \Delta m = \sum \Delta \ell$. Consequently, $2\leaf(G) \geq 2 \sum \Delta \ell = \sum \Delta \ell + \Delta m \ge (\sum \Delta n)/5 = (V(G) - \inode{G})/5$, as desired.
}

Given \cref{lem:dead-leaf}, our goal is, for every constant $\varepsilon > 0$, find a sparse subgraph $H$ of the input graph $G$ so that:

\begin{enumerate}
\item The nodes incident to the edges in $T^* \setminus H$ can be \emph{dominated} by a small set $S$ of at most $\varepsilon(|V(G)|-\inode(G))$ nodes, i.e. either in $S$ or has at least one neighbor node in $S$ using the edges in $H$, where $T^*$ is any optimal MLST of $G$.
\item $H$ is connected.
\end{enumerate}

Because of the existence of the small dominating set $S$, one can obtain a forest $F$ from $T^* \cap H$ by adding some edges in $H$ so that the number of leaves in $F$ is no less than that in $T^*$ by $|S|$ and the number of connected components in $F$ is no more than that in $T^*$ by $|S|$. Since $H$ is connected, one can further obtain a spanning tree $T$ from $F$ by adding at most $|S|$ edges in $H$, so the number of leaves in $T$ is no less than that in $F$ by $2|S|$. Pick an $H$ associated with a sufficiently small $\varepsilon$, by \cref{eqn:leafy} $H$ is an $\varepsilon$-MLST sparsifier. A formal proof is given below.

\begin{theorem} \label{thm:mlst-sparsifier}
For every integer $k \ge 186$, every connected simple undirected graph $G$ has 
$$
    \leaf(S_k(G) \cup T(G)) \ge \left(1-30\left(\frac{1+\ln (k+1)}{k+1}\right)\right) \leaf(G).
$$ 
\end{theorem}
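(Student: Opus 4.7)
Set $\varepsilon = 30(1+\ln(k+1))/(k+1)$ and $H = S_k(G) \cup T(G)$; note $H$ is connected because $T(G)$ is a spanning tree of $G$. Let $T^*$ be an optimal MLST of $G$. The plan is to construct a spanning tree $T'$ of $H$ satisfying $\leaf(T') \ge (1-\varepsilon)\leaf(G)$, following the outline given right after \cref{lem:dead-leaf}.

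First, I would locate the edges of $T^*$ missing from $H$. Set $V_{k+1} = \{x \in V(G) : \deg_G(x) \ge k+1\}$ and $M = E(T^*) \setminus E(H)$. If $(u,v) \in E(G) \setminus E(S_k(G))$ then both $u,v \in V_{k+1}$, since otherwise the constraint $\deg_{S_k}(x) \ge \deg_G(x)$ at a low-degree endpoint $x$ would force the edge into $S_k(G) \subseteq H$; hence each endpoint of an $M$-edge retains at least $k$ neighbors in $S_k(G)$. Moreover, no ignorable vertex is adjacent to any $V_{k+1}$-vertex, because the neighbors of an ignorable vertex all have degree $2$ and $k+1 \ge 187$. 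Using this disjointness, the standard probabilistic dominating-set argument restricted to non-ignorable vertices---sample each non-ignorable vertex independently with probability $p = \ln(k+1)/(k+1)$ and then adjoin any $V_{k+1}$-vertex left undominated---yields a set $S \subseteq V(G)$ dominating $V_{k+1}$ in $S_k(G)$ with
$$
|S| \le \tfrac{1+\ln(k+1)}{k+1}\bigl(|V(G)| - \inode(G)\bigr),
$$
where the calculation uses $(1-p)^{k+1} \le 1/(k+1)$ and $|V_{k+1}| \le |V(G)| - \inode(G)$.

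Second, I would convert $T^*$ into a spanning tree $T'$ of $H$ with $\leaf(T') \ge \leaf(T^*) - 3|S|$, in two stages. Consider the forest $F_0 = T^* \cap H$, which has $|M|+1$ components; each such component is incident to a removed $M$-edge and therefore contains at least one $V_{k+1}$-vertex. I first build a forest $F \supseteq F_0$ in $H$ with at most $|S|+1$ components by adding $S_k$-edges incident to the hubs $s \in S$ in such a way that only the hubs themselves are promoted from leaves to internal vertices, which gives $\leaf(F) \ge \leaf(T^*) - |S|$. Since $H$ is connected, at most $|S|$ further $H$-edges then turn $F$ into a spanning tree $T'$, each destroying at most two leaves, so $\leaf(T') \ge \leaf(F) - 2|S| \ge \leaf(T^*) - 3|S|$. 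Invoking \cref{lem:dead-leaf} as $|V(G)| - \inode(G) \le 10\,\leaf(G)$ gives $3|S| \le \varepsilon\,\leaf(G)$, whence $\leaf(H) \ge \leaf(T') \ge (1-\varepsilon)\leaf(G)$. The hypothesis $k \ge 186$ is exactly what forces $\varepsilon < 1$, so that the bound is nontrivial.

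The main obstacle I anticipate is the hub-based merging of $F_0$: arranging that only the $|S|$ hubs, and not the dominated landing vertices on their stars, are promoted from leaves to internal vertices. My plan is to view the components of $F_0$ as super-nodes of an auxiliary bipartite multigraph whose other side is $S$---edges representing ``$v \in C$ is a $V_{k+1}$-vertex with patron $s$''---and to take a spanning forest of this bipartite graph that uses each landing vertex at most once, preferring landing vertices that already have $F_0$-degree at least $2$ whenever a component offers the choice.
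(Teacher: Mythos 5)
Your proposal follows the paper's overall strategy: the same random construction of a dominating set $S$ among non-ignorable vertices with $p = \ln(k+1)/(k+1)$ (plus undominated high-degree vertices), the same $3|S|$ leaf-loss budget split $|S| + 2|S|$, and the same final appeal to \cref{lem:dead-leaf}. The first half — the bound $\E[|S|] \le \bigl(p + (1-p)^{k+1}\bigr)(|V(G)|-\inode(G))$ — is complete and matches the paper. But the second half, repairing $T^* \cap H$ into a spanning tree of $H$, has a genuine gap at exactly the step you flag as ``the main obstacle,'' and your proposed fix does not close it.

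The ``prefer landing vertices with $F_0$-degree at least $2$'' heuristic does not by itself ensure no leaf of $T^*$ is destroyed. Consider a size-two component of $F_0 = T^* \cap H$: both its vertices have $F_0$-degree exactly $1$, so the heuristic offers no preference, yet one of them may be a $V_{k+1}$-vertex that is a leaf of $T^*$ (its single $T^*$-edge happens to lie in $H$, so it did not get isolated). Choosing that vertex as a landing point destroys a $T^*$-leaf beyond the $|S|$ you budgeted for. More generally, $F_0$-degree is simply not the right quantity to look at.

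The paper's resolution is to restrict landing vertices to $L$ — the set of endpoints of edges in $M = E(T^*)\setminus E(H)$, a subset of $V_{k+1}$ that still meets every component of $F_0$ — and to exploit a dichotomy driven by the $M$-edge rather than by $F_0$-degree. If $v \in L$ is isolated in $F_0$, attaching $v$ to a hub in $S$ makes it a \emph{leaf} of $F$, so no loss; this is precisely what rescues a $T^*$-leaf of degree $> k$ whose unique $T^*$-edge was dropped. If $v \in L$ has $F_0$-degree $\ge 1$, then $v$ carries at least one $F_0$-edge and at least one $M$-edge, so $\deg_{T^*}(v) \ge 2$ and $v$ was never a $T^*$-leaf to begin with; demoting it is harmless. (And a $T^*$-leaf with degree $\le k$ has its one $T^*$-edge in $S_k(G)$, hence stays a leaf of $F_0$, and since it is not in $L$ and not in $S$ it is never touched.) Together these cases give exactly your target $\leaf(F) \ge \leaf(T^*) - |S|$. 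Substituting this $L$-vertex dichotomy for your degree-preference heuristic makes your second stage rigorous, and the rest of your argument then goes through unchanged.
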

\metaproof{\cref{thm:mlst-sparsifier}}{
Let $T^*$ be a spanning tree of $G$ that has $\leaf(G)$ leaves. Let $k$ be some fixed integer at least $3$ and let $H = S_k(G) \cup T(G)$. Let $L = \{ x \in V(G) \colon x \mbox{ is incident to some } e \in T^* \setminus H \}$. Note that every node $x \in L$ has $\deg_{G}(x) > k$, so $x$ and all neighbors of $x$ are not ignorable nodes in $G$. 

First, we show that $L$ can be dominated by a small set $S$ of size at most $\varepsilon(|V(G)|-\inode(G))$ using some edges in $H$. We obtain $S$ from two parts, $S_1$ and $S_2$. $S_1$ is a random node subset sampled from the non-ignorable nodes in $G$, in which each node is included in $S_1$ with probability $p$ independently, for some $p \in (0, 1)$ to be determined later. Thus, $E[|S_1|] = p(|V(G)|-\inode(G))$. Since every node $x \in L$ is adjacent only to the non-ignorable nodes in $G$, the probability that $x \in L$ is not dominated by any node in $S_1$ is 
$$
    \Pr[x \mbox{ is not dominated}] = (1-p)^{1+\deg_{H}(x)} \le (1-p)^{k+1}.
$$
Let $S_2$ be the set of nodes in $L$ that are not dominated by any node in $S_1$ using the edges in $H$. Thus, 
$$
    E[|S|] = E[|S_1|+|S_2|] \le \left(p+(1-p)^{k+1}\right)(|V(G)|-\inode(G)).
$$

Then, we obtain a forest $F$ from $T^* \cap H$ by adding some edges in $H$ as follows. Initially, $F = T^* \cap H$.

\begin{description}
    \item [Operation 1:] For each $x \in L$, if $x$ is an isolated node in $T^* \cap H$ and $x \notin S$, then add an edge $e$ from $x$ to some node in $S$ to $F$. Such an edge $e$ must exist because $S$ dominates $L$.
    \item [Operation 2:] For each $x \in L$, if $x$ is not an isolated node in $T^* \cap H$ and the connected component that contains $x$ has an empty intersection with $S$, then add an edge $e$ from $x$ to some node in $S$ to $F$. Again, such an edge $e$ must exist because $S$ dominates $L$.
\end{description}

For each leaf $\ell \in T^*$, if $\deg_{G}(\ell) \le k$, then $\ell$ is a leaf in $T^* \cap H$ (also in $F$ unless $\ell \in S$); otherwise $\deg_{G}(\ell) > k$, if $\ell$ is not a leaf in $T^* \cap H$, then $\ell$ must be an isolated node in $T^* \cap H$, and by Operation 1 $\ell$ is connected to some node in $S$ unless $\ell \in S$. Hence, except those in $S$, every $\ell$ is a leaf node in $F$, so the number of leaves in $F$ is no less than that in $T^*$ by $|S|$. By Operation 2, the number of connected component is at most $|S|$.

Lastly, since $H$ is connected, one can obtain a spanning tree $T$ from $F$ by connecting the components in $F$ by some edges in $H$. Thus, the number of leaves in $T$ is no less than that in $T^*$ by $3|S|$. To obtain an $\varepsilon$-MLST sparsifier, by \cref{lem:dead-leaf} we need:
$$
    \frac{3|S|}{\frac{1}{10}(|V(G)|-\inode(G))} \le 30 \left(p+(1-p)^{k+1}\right) \le 30\left(p + e^{-p(k+1)} \right) \le \varepsilon 
$$
Setting $p = (\ln (k+1))/(k+1)$ gives the desired bound, and the leading constant is positive for $k \ge 186$.
}

    To find such a subgraph $H$, fetching a spanning tree of the input graph $G$ and grabbing $k$ edges for each node in $G$ suffices. Thus, we get a single-pass $\tilde{O}(n)$-space algorithm for the insertion-only model. As for the turnstile model, we use $\tilde{O}(k)$ $\ell_0$-samplers~\cite{Jowhari11} for each node in $G$ to fetch at least $k$ neighbors of $x$ w.h.p., and fetch a spanning tree by appealing to the single-pass $\tilde{O}(n)$-space algorithm for spanning trees in dynamic streams~\cite{AhnGM12}.
    This gives a proof of \cref{thm:main-mlst}.

\paragraph{Applications.} In~\cite{GandhiHKPS18}, Gandhi et al. show a connection between the maximum-leaf spanning trees and connected maximum cut. Their results imply that, for any unweighted regular graph $G$, the connected maximum cut can be found by the following two steps:

\begin{description}
    \item [Step 1:] Find a spanning tree $T$ whose $\leaf(T) \ge (1/2-\varepsilon)\leaf(G)$ for some constant $\varepsilon > 0$.
    \item [Step 2:] Randomly partition the leaves in $T$ into two parts $L$ and $R$ so that each leaf is included in $L$ with probability $1/2$ independently.  
\end{description}

Then, outputting $L$ and $V(G) \setminus L$ yields an $8+\varepsilon$-approximation for connected maximum cut. Step 1 is the bottleneck and can be implemented by combining our $\varepsilon$-MLST sparsifier (\cref{thm:main-mlst}) with the 2-approximation algorithm for MLST due to Solis{-}Oba, Bonsma, and Lowski~\cite{Solis-Oba17}. This gives  \cref{cor:cmcut}.

\begin{corollary}\label{cor:cmcut}
In the turnstile model, for every constant $\varepsilon > 0$, there exists a randomized algorithm that can approximate the connected maximum cut for $n$-node unweighted regular graphs to within a factor of $8+\varepsilon$ with probability $1-1/n^{\Omega(1)}$ in a single pass using $\tilde{O}(f(\varepsilon) n)$ space. 
\end{corollary}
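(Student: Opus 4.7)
The plan is to instantiate the two-step recipe of Gandhi et al.~\cite{GandhiHKPS18} in the turnstile streaming model, using our $\varepsilon$-MLST sparsifier as the bridge. First, choose a sufficiently small constant $\varepsilon' > 0$ (to be fixed at the end in terms of $\varepsilon$) and, in a single pass over the stream, compute an $\varepsilon'$-MLST sparsifier $H$ of $G$ via \cref{thm:main-mlst}; this uses $\tilde{O}(f(\varepsilon')n)$ space and succeeds with probability $1-1/n^{\Omega(1)}$.

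After the pass, $H$ is stored in memory with only $f(\varepsilon')|V(G)|$ edges, so we may invoke the polynomial-time $2$-approximation algorithm for MLST of Solis{-}Oba, Bonsma, and Lowski~\cite{Solis-Oba17} \emph{entirely in RAM} on $H$. Since $H$ is a connected spanning subgraph of $G$, the output is a spanning tree $T$ of $G$ with
$$
    \leaf(T) \ge \tfrac{1}{2}\leaf(H) \ge \tfrac{1}{2}(1-\varepsilon')\leaf(G).
$$
Choosing $\varepsilon'$ so that $\tfrac{1}{2}(1-\varepsilon') \ge \tfrac{1}{2} - \varepsilon/c$ for the appropriate constant $c$ arising from Gandhi et al.'s analysis fulfills Step~1 of their recipe. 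Step~2, the random $\{L, V(G)\setminus L\}$ partition of the leaves of $T$, is a trivial in-memory post-processing step: we assign each leaf an independent fair coin flip, which costs only $O(n)$ additional bits. Then, by the analysis of Gandhi et al., the resulting cut is an $(8+\varepsilon)$-approximation of the connected maximum cut on unweighted regular graphs.

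Putting the pieces together, the whole algorithm performs one turnstile pass, uses $\tilde{O}(f(\varepsilon')n) = \tilde{O}(f(\varepsilon)n)$ space (after absorbing the $\varepsilon \mapsto \varepsilon'$ transformation into a redefinition of $f$), runs in polynomial time, and succeeds with probability $1 - 1/n^{\Omega(1)}$ (the only randomized failure event is that of \cref{thm:main-mlst}; the random leaf partition always terminates). The main conceptual obstacle is purely bookkeeping: verifying that plugging a $(1/2 - \varepsilon)$-approximate max-leaf spanning tree into Gandhi et al.'s framework preserves, up to the same additive slack, their $8$-factor guarantee on regular graphs, so that the final approximation ratio is indeed $8+\varepsilon$ rather than something larger. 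This loss is linear in the slack of the leaf bound, so the constants work out by shrinking $\varepsilon'$ accordingly.
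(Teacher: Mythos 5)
Your proposal matches the paper's own approach exactly: build the $\varepsilon'$-MLST sparsifier in one turnstile pass via \cref{thm:main-mlst}, run the Solis{-}Oba--Bonsma--Lowski $2$-approximation in RAM on the stored sparsifier to realize Step~1 of Gandhi et al.\ with $\leaf(T) \ge \tfrac12(1-\varepsilon')\leaf(G)$, and perform the random leaf partition as in-memory post-processing. The bookkeeping of constants, the space bound, and the failure probability accounting are all handled correctly.
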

\section{Breadth-First Search Trees} \label{sect.bfs}

A  BFS tree of an $n$-node connected simple undirected graph can be constructed in $O(n)$ passes using $\tilde{O}(n)$ space by simulating the standard BFS algorithm layer by layer. By storing the entire graph, a BFS tree can be computed in a single pass using $O(n^2)$ space. In Section~\ref{sect:bfs-1}, we show that it is possible to have a smooth tradeoff between pass complexity and space usage. In Section~\ref{sect:bfs-2}, we prove  \cref{thm:main-bfs}, which shows that the above tradeoff can be improved when randomness is allowed, even in the turnstile model. Then, in \cref{sect:bfs-3}, we show that multiple BFS trees, each starting from a distinct source node, can be computed more efficiently 
in a batch than individually. Lastly, we demonstrate an  application to diameter approximation in \cref{sect:bfs-4}.

In the BFS problem, we are given an $n$-node connected simple undirected graph $G = (V, E)$ and a distinguished node $s$, and it suffices to compute the distance $\dist(s, v)$ for each node $v \in V \setminus \{s\}$.  To infer a BFS tree from the distance information $\{ \dist(s, v)\colon v \in V\}$, it suffices to assign a parent to each node $v \in V \setminus \{s\}$ the smallest-identifier node from the set $\{u \in N(v) \colon \dist(s, u) = \dist(s, v) - 1\}$ where $N(v)$ is the set of $v$'s neighbors. This can be done with one additional pass using $\tilde{O}(n)$ space in the insertion-only model. In the turnstile model, for $p$-pass streaming algorithms with $p > \log n$, this can be done with $O(\log n/\log \log n)$ additional passes w.h.p. using $O(\log n)$ $\ell_0$-samplers~\cite{Jowhari11} for each node $v \in V \setminus \{s\}$, and this costs $\tilde{O}(n)$ space. For $p \le \log n$, the space bound is $\tilde{O}(n^2)$ and one can use $\tilde{O}(n)$ $\ell_0$-samplers for each node, so this step can be done in one additional pass. Hence in the subsequent discussion we focus on computing the distance from $s$ to each node $v \in V \setminus \{s\}$.

\subsection{A Simple Deterministic Algorithm \label{sect:bfs-1} }

We present a simple deterministic $p$-pass $\tilde{O}(n^2/p)$-space  algorithm in the insertion-only model by an observation that every root-to-leaf path in a BFS tree cannot visit too many high-degree nodes (\cref{lem:degsum}). Then, one can simulate the standard BFS algorithm efficiently layer-by-layer over high-degree nodes (\cref{thm:strbf}).

\begin{lemma}\label{lem:degsum}
Let $P$ be a root-to-leaf path in some BFS tree of an $n$-node connected simple undirected graph $G$. Then 
$$
    \sum_{x \in P} \deg_{G}(x) \le 3n = O(n)
$$
where $\deg_{G}(x)$ denotes the degree of $x$ in $G$. 
\end{lemma}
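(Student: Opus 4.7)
The plan is to exploit the layered structure of BFS. Let $P = v_0, v_1, \ldots, v_h$ be the root-to-leaf path, where $v_0$ is the root of the BFS tree and $v_h$ is the leaf. Since $P$ is a path in the BFS tree, $v_i$ lies at BFS layer $i$, i.e., $\dist(v_0, v_i) = i$ in $G$. So the nodes of $P$ occupy pairwise distinct BFS layers $0, 1, \ldots, h$.

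Next, I would use the standard BFS-edge property: for any edge $(u, w) \in E(G)$, the BFS layers of $u$ and $w$ differ by at most $1$. Thus, for any node $u \in V(G)$ at BFS layer $j$, the set of indices $i$ for which $u \in N_G(v_i)$ is contained in $\{j-1, j, j+1\}$, and consequently $u$ is a neighbor (in $G$) of at most three nodes of $P$.

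From here the bound follows by a double-counting:
\[
\sum_{x \in P} \deg_G(x) \;=\; \sum_{x \in P} |N_G(x)| \;=\; \sum_{u \in V(G)} \big|\{\, i \colon u \in N_G(v_i)\,\}\big| \;\le\; 3\,|V(G)| \;=\; 3n,
\]
which is exactly the claimed inequality. There is no real obstacle here; the only thing to be careful about is handling the boundary indices (e.g., for a node at layer $0$ or layer $h$, the set $\{j-1, j, j+1\}$ may go out of range, but that can only decrease the count, so the bound of $3$ still holds). No deep argument is needed — the proof is essentially a one-line observation about the layered structure of BFS trees.
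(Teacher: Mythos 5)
Your proof is correct and is essentially the same as the paper's, just phrased from the dual (double-counting) side: both obtain the factor $3$ from the BFS layer structure. The paper argues that no two path nodes $x_i, x_j$ with $i \equiv j \pmod 3$ can share a neighbor (else the root-to-leaf path could be shortened), so each of the three residue classes contributes at most $n$ to the degree sum; this is exactly the statement that each $u \in V(G)$ is adjacent to at most three nodes of $P$, which is your per-vertex count.
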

\metaproof{\cref{lem:degsum}}{
Suppose $P = x_1 x_2 \cdots x_k$ comprises $k$ nodes. Observe that if $x_i$ and $x_j$ have $i \equiv j \pmod 3$, then $x_i$ and $x_j$ cannot share any neighbor node; otherwise $P$ can be shorten, a contradiction. Thus, for each $c \in \{0, 1, 2\}$ the total contribution of all $x_i$'s whose $i \equiv c \pmod 3$ to $\sum_{x_i \in P} \deg_{G}(x_i)$ is $O(n)$. Summing over all possible $c$ gives the bound. 
}

We note that Lemma~\ref{lem:degsum} is near-optimal.
To see why, let $H = (V, E)$ where $V$ is the union of disjoint sets $V_0, V_1, \ldots, V_k$ and $E = \{(x, y) : x \in V_i \mbox{ and } y \in V_{j} \mbox{ for any } i, j \mbox{ that }|i-j| \le 1\}$. By setting $k = \lceil(n-1)/t\rceil$ for some parameter $t$, $|V_0| = 1$, $|V_i| = t$ for every $i \in [1, k-1]$, and $1 \leq |V_k| \leq t$, 
any BFS tree rooted at the node in $V_0$ has a root-to-leaf path $Q$ of length $k$, and each node in $Q \cap (V_2 \cup V_3 \cup \ldots \cup V_{k-2})$ has degree $3t - 1$. Pick any $t$ such that $k = \omega(1)$ and $t = \omega(1)$. We have $\sum_{x \in Q} \deg_{H}(x) = (3-o(1))n$.

\begin{theorem}\label{thm:strbf}
Given an $n$-node connected simple undirected graph $G$ with a distinguished node $s$, a BFS tree rooted at $s$ can be found deterministically in $p$ passes using $\tilde{O}(n^2/p)$ space for every $p \in [1, n]$ in the insertion-only model.
\end{theorem}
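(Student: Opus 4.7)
The plan is to split the edges of $G$ by degree and attack the two parts with different machinery. Call a vertex \emph{heavy} if $\deg_{G}(v) > n/p$ and \emph{light} otherwise, let $E_L$ be the set of edges with at least one light endpoint, and let $E_H = E \setminus E_L$ be the remaining heavy--heavy edges. Since each light vertex contributes at most $n/p$ edges, $|E_L| \le n \cdot (n/p) = n^2/p$.

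In the first pass the algorithm counts every degree online and, on the fly, stores every arriving edge $(u,v)$ such that at least one of the partial counts of $u$ and $v$ is currently at most $n/p$. The stored set is a superset of $E_L$ and still has size $O(n^2/p)$, because each vertex contributes at most $O(n/p)$ such edges. After this pass the final degrees are known (so the heavy/light classification is fixed), and the algorithm computes, offline, BFS distances $d(v)$ of every vertex in the stored subgraph. This yields the invariant $d(v) \ge \dist(s,v)$ for every $v$, with equality whenever the shortest $s$--$v$ path in $G$ uses no edge of $E_H$.

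Each subsequent pass simulates one Bellman--Ford relaxation round over $E_H$: for every streamed edge $(u,v)$ with both endpoints heavy (checked against the stored degrees), update $d(u) \gets \min(d(u), d(v)+1)$ and symmetrically for $v$. At the end of the pass, the new distances are propagated through the stored subgraph offline via a multi-source BFS with each vertex weighted by its current $d$-value, costing no extra stream pass. By \cref{lem:degsum}, any root-to-$v$ path in a BFS tree of $G$ has degree sum at most $3n$, and hence contains at most $3p$ heavy vertices and therefore at most $3p-1$ edges of $E_H$. A standard induction on the number of $E_H$ edges along the shortest $s$--$v$ path shows that after $k$ relaxation passes $d(v) = \dist(s,v)$ for every $v$ whose shortest path uses at most $k$ edges from $E_H$, so $O(p)$ passes make all distances exact; rescaling a constant absorbs this into $p$ passes, and one final pass recovers the explicit BFS tree from $d$ as noted at the start of the section.

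The main obstacle is that a light vertex can lie on a shortest path that crosses several $E_H$ edges, so relaxing only on $E_H$ will never by itself repair its distance. The remedy is exactly the offline propagation through the stored subgraph after each relaxation round: because that subgraph is stored only once and has size $O(n^2/p)$, the propagation incurs no additional stream passes and the total space stays within $\tilde{O}(n^2/p)$.
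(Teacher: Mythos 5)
Your proof is correct and follows essentially the same route as the paper: store $O(n/p)$ edges per vertex in the first pass (you do this via an explicit heavy/light classification, the paper via ``arbitrary $n/p$ neighbors per node''), invoke \cref{lem:degsum} to bound the number of unstored edges along any BFS root-to-leaf path by $O(p)$, and then run Bellman--Ford relaxation passes over the streamed unstored edges interleaved with offline shortest-path propagation through the stored subgraph. The two decompositions are interchangeable here since both guarantee that any edge not in memory has an endpoint of degree $\Omega(n/p)$, which is all the degree-sum argument needs.
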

\metaproof{\cref{thm:strbf}}{
 Given a parameter $k$, our algorithm goes as follows. In the first pass, keep arbitrary $n/k$ neighbors for each node $v \in G$ in memory and then use the in-memory edges to update the distance $\dist(s, v)$ for each $v \in G$ by any single-source shortest path algorithm. The set of the in-memory edges is an invariant after the first pass. Hence, the memory usage is $\tilde{O}(n^2/k)$.
 Then, in each of the subsequent passes, processing the edges $(u, v)$ in the stream one by one, without keeping them in memory after the processing, if $\dist(s, u) + 1 < \dist(s, v)$ (resp. if $\dist(s, v) + 1 < \dist(s, u)$), then update $\dist(s, v)$ (resp. $\dist(s, u)$). After the edges in the stream are all processed, use the in-memory edges to update the distance $\dist(s, v)$ for each $v \in G$ again by any single-source shortest path algorithm but with initial distances. Our algorithm repeats until no distance has been updated in a single pass. 
 
 Observe a root-to-leaf path $P = s z_1 z_2 \cdots z_t$ in some BFS tree rooted at $s$. Suppose $P$ contains exactly $\ell$ edges that appears only on tape, let them be  $(z_{x_1}, z_{y_1}), \ldots,  (z_{x_\ell}, z_{y_\ell})$ where $1 \le x_i < y_i \le x_{i+1} < y_{i+1} \le t$ for every $i \in [1, \ell-1]$. Let $\pred_P(z_i)$ be the predecessor of $z_i$ on $P$ that is closest to $z_i$ among nodes in $\{s\} \cup \{z_{y_j} : y_j < i\}$. By the definition of the above construction, it is assured that $\deg(z_{x_i}) \ge n/k$ for each $i \in [1, \ell]$. Thus by~\cref{lem:degsum}, $\ell = O(k)$. Then we appeal to the argument used for the analysis of Bellman-Ford algorithm~\cite{Ford56, Bellman58}. For every $i \in [1, t]$, if $i \notin \{y_1, y_2, \ldots, y_\ell\}$, $\dist(s, z_i)$ attains the minimum possible value at the same pass when $\dist(s, \pred_P(z_i))$ attains; otherwise $i = y_j$ for some $j \in [1, \ell]$, $\dist(s, y_j)$ attains the minimum possible value at most one pass after $\dist(s, x_j)$ attains. Hence, $O(k)$ passes suffices to compute $\dist(s, z_i)$ for all $i \in [1, t]$ and this argument applies to all root-to-leaf paths.  Setting $k = p$ yields the desired bound.
}

\subsection{A More Efficient Randomized Algorithm \label{sect:bfs-2}}

In this section, we prove 
Theorem~\ref{thm:main-bfs}. Our BFS algorithm is based on the following generic framework, which has been applied to finding shortest paths in the parallel and the distributed settings~\cite{Elkin17,GhaffariL17,HuangNS17,UllmanY91}. Sample a set $U$ of approximately $k$ distinguished nodes such that each node $v \neq s$  joins $U$ independently with probability $k / n$, and $s \in U$ with probability 1. By a Chernoff bound, $|U|  = \tilde{\Theta}(k)$ with high probability. We will grow a local BFS tree of radius $\tilde{O}(n/k)$ from each node in $U$, and then we will construct the final BFS tree by combining them. We will rely on the following lemma, which first appeared in~\cite{UllmanY91}.

\begin{lemma}[\cite{UllmanY91}]\label{lem:center}
Let $s$ be a specified source node.
Let $U$ be a subset of nodes such that each node $v \neq s$ joins $U$ with probability $k / n$, and $s$ joins $U$ with probability 1.
For any given parameter $C \geq 1$,  the following holds with probability $1 - n^{-\Omega(C)}$.
For each node $t \neq s$, there is an $s$-$t$ shortest path $P_{s,t}$ such that each of its $C (n \log n)/k$-node subpath $P'$ satisfies $P' \cap U \neq \emptyset$.
\end{lemma}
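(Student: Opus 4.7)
The plan is a standard probabilistic-method argument: fix canonical shortest paths before sampling, then union-bound over subpaths. First, for each $t \neq s$, fix once and for all a canonical shortest path $P_{s,t}$ from $s$ to $t$, chosen by some rule that depends only on the graph (say, the lexicographically smallest shortest path under a fixed node ordering). This must be done before exposing the randomness in $U$, so that the locations of the nodes of $P_{s,t}$ are deterministic with respect to the sampling.

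Next, I would bound the probability of the bad event for a single subpath. Let $L := C(n \log n)/k$ and fix one $L$-node contiguous subpath $P' = (v_0,\ldots,v_{L-1})$ of some $P_{s,t}$. If $s \in P'$ then $P' \cap U \neq \emptyset$ automatically, so assume $s \notin P'$. Since each of the $L$ nodes joins $U$ independently with probability $k/n$,
\[
\Pr[P' \cap U = \emptyset] \;\le\; \left(1 - \frac{k}{n}\right)^{L} \;\le\; e^{-kL/n} \;=\; e^{-C \log n} \;=\; n^{-C}.
\]

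Then I would finish with a union bound. For each $t \neq s$, the canonical path $P_{s,t}$ has at most $n$ nodes, hence at most $n$ contiguous $L$-node subpaths. Across all choices of $t$ this gives at most $n^2$ subpaths in total, so the probability that some such subpath avoids $U$ is at most $n^2 \cdot n^{-C} = n^{-(C-2)} = n^{-\Omega(C)}$. On the complementary event, every $L$-node subpath of every canonical $P_{s,t}$ intersects $U$, which is exactly the claimed property.

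There is no real obstacle here; the only point that needs care is the order of quantifiers, namely that the canonical paths $P_{s,t}$ are fixed before the random sample $U$ is drawn (otherwise the independence used in the per-subpath calculation is not legitimate). The constants are flexible: any $C$ large enough that $C - 2 = \Omega(C)$ suffices, and one can absorb the $\log n$ factor into $L$ exactly to make the per-subpath failure probability $n^{-C}$.
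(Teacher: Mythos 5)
Your proof is correct. Note that the paper itself does not prove this lemma—it is stated with a citation to Ullman and Yannakakis~\cite{UllmanY91}—so there is no in-paper argument to compare against; what you have written is the standard ``landmark sampling'' argument from that reference (and from the later parallel/distributed shortest-path literature the paper cites alongside it). You correctly identified the one point that is easy to get wrong, namely that the canonical shortest paths must be fixed before $U$ is drawn so that the per-subpath probability calculation is legitimate, and you correctly excluded the case $s \in P'$ before applying $(1-k/n)^L \le e^{-kL/n} = n^{-C}$, since $s$ joins $U$ with probability $1$ rather than $k/n$.

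One minor point worth tightening: the union bound gives failure probability at most $n^{2-C}$, which is vacuous for $C \in [1,2]$ even though the lemma is stated for all $C \ge 1$. As you observe, this is harmless because the conclusion is only $n^{-\Omega(C)}$, so one can either restrict attention to $C$ at least some constant, or absorb the extra $n^2$ by taking $L = (C+2)(n\log n)/k$; either way the statement as phrased is recovered with the implied constant adjusted.
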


For notational simplicity, in subsequent discussion we write $h = C (n \log n)/k - 1 = \tilde{O}(n/k)$.
Lemma~\ref{lem:center} shows that for each node $t \in V \setminus \{s\}$, 
\begin{equation}\label{eqn:dist1}
\dist(s,t) = \min_{u \in U \cap N^h(t)} \dist(s,u) + \dist(u,t)
\end{equation}
with probability $1-n^{-\Omega(C)}$ where $N^h(v) = \{u \colon \dist(u, v) \le h\}$.

To see this, consider the $s$-$t$ shortest path $P_{s,t}$ specified in Lemma~\ref{lem:center}. 
If the number of nodes in $P_{s,t}$ is less than $h$, then the above claim holds because $s \in U \cap N^h(t)$.
Otherwise, Lemma~\ref{lem:center} guarantees that there is a node $u \in P_{s,t} \cap U \cap N^h(t)$ with  probability  $1 - n^{-\Omega(C)}$.
Using \cref{eqn:dist1}, 
a BFS tree can be computed using the following steps.
\begin{enumerate}
    \item Compute $\dist(u,v)$ for each $u \in U$ and $v \in U \cap N^h(u)$. Using this information, we can infer $\dist(s, u)$ for each $u \in U$.
    \item Compute $\dist(s,t)$ for each $t \in V \setminus \{s\}$ by the formula $\dist(s,t) = \min_{u \in U \cap N^h(t)} \dist(s,u) + \dist(u,t)$.
\end{enumerate}

In what follows, we show how to implement the above two steps in the streaming model, using  $\tilde{O}(n + k^2)$ space and $\tilde{O}(n/k)$ passes. By a change of parameter $p = \tilde{O}(n/k)$, we obtain Theorem~\ref{thm:main-bfs}.

\paragraph{Step 1.}
To compute $\dist(u,v)$ for each $u \in U$ and $v \in U \cap N^h(u)$, we let each $u \in U$ initiate a radius-$h$ local BFS rooted at $u$. A straightforward implementation of this approach in the streaming model costs $h = \tilde{O}(n / k)$ passes and $O(n \cdot |U|) = \tilde{O}(nk)$ space, since we need to maintain $|U|$  search trees simultaneously.

We show that the space requirement can be improved to $\tilde{O}(n + k^2)$. Since we only need to learn the distances between nodes in $U$, we are allowed to forget distance information associated with nodes $v \notin U$   when it is no longer needed. Specifically,
suppose we start the BFS computation rooted at $u \in U$ at the $\tau_u$th pass, where  $\tau_u$ is some number to be determined. For each $0 \leq i \leq h-1$, the induction hypothesis specifies that at the beginning of the $(\tau_u + i)$th pass, all nodes in $L_i(u) = \{ v \in V \colon \dist(u,v) = i\}$ have learned that $\dist(u,v) = i$. During the $(\tau_u + i)$th pass, for each node $v \in V$ with  $\dist(u,v) > i$, we check if $v$ has a neighbor in $L_i(u)$. If so, then we learn that $\dist(u,v) = i+1$.

In the above BFS algorithm,  if $\dist(u,v) = i$ for some $0 \leq i \leq h-1$, then we learn the fact  that $\dist(u,v) = i$ during the $(\tau_u + i-1)$th pass. Observe that such information is only needed during the next two passes. After the end of the $(\tau_u + i+1)$th pass, for each $v \in V$ with $\dist(u,v) = i$, we are allowed to forget that $\dist(u,v) = i$. That is, $v$ only needs to participate in the BFS computation rooted at $u$ during these three passes $\{ \tau_u + i-1, \ \tau_u + i, \ \tau_u + i + 1\}$.

For each $u \in U$, we assign the starting time $\tau_u$ independently and uniformly at random from $\{1, 2, \ldots, h\}$.  Lemma~\ref{lem:aux} shows that for each node $v \in V$ and for each pass $1 \leq t \leq 2h-1$, the number of BFS computations that involve $v$ is $\tilde{O}(1)$. The idea of using random starting time to schedule multiple algorithms to minimize congestion can be traced back from~\cite{LeightonMR94}. Note that $\tau_u + \dist(u,v) -1 \leq t \leq \tau_u + \dist(u,v) + 1$ is the criterion for $v$ to participate in the BFS rooted at $u$ during the $t$th pass.

\begin{lemma}\label{lem:aux}
For each node $v$, and for each integer $1 \leq t \leq 2h-1$, with high probability, the number of nodes $u \in U$ such that $\tau_u + \dist(u,v) -1 \leq t \leq \tau_u + \dist(u,v) + 1$ is at most $O(\max\{\log n, |U| / h\})$.
\end{lemma}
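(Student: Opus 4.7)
The plan is to fix a node $v \in V$ and a pass $1 \leq t \leq 2h-1$, and show the claim via a Chernoff bound followed by a union bound over all choices of $v$ and $t$.

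First I would observe that for a fixed $u$ and $v$, the distance $\dist(u,v)$ is a deterministic quantity, so the event $\tau_u + \dist(u,v) - 1 \leq t \leq \tau_u + \dist(u,v) + 1$ is equivalent to $\tau_u \in \{t - \dist(u,v) - 1,\ t - \dist(u,v),\ t - \dist(u,v) + 1\}$. Since $\tau_u$ is drawn uniformly and independently from $\{1, 2, \ldots, h\}$, this event has probability at most $3/h$, regardless of the value of $\dist(u,v)$. Let $X_u$ be the indicator of this event for each $u \in U$, and let $X = \sum_{u \in U} X_u$. Conditioned on the choice of $U$, the variables $\{X_u\}_{u \in U}$ are mutually independent, and $\Expect[X \mid U] \leq 3|U|/h$.

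Next I would apply a standard Chernoff bound to $X$. In the regime where $|U|/h \geq \log n$, a multiplicative Chernoff bound yields $X = O(|U|/h)$ with probability $1 - n^{-\Omega(1)}$; in the regime where $|U|/h < \log n$, the expectation is smaller than $\log n$ and the usual Chernoff-type tail bound for sums of independent Bernoullis gives $X = O(\log n)$ with probability $1 - n^{-\Omega(1)}$. In either case, $X = O(\max\{\log n,\ |U|/h\})$ with probability $1 - n^{-\Omega(1)}$, where the hidden constant in the exponent can be made as large as desired by tuning the Chernoff constants.

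Finally I would take a union bound over all $n$ choices of $v$ and all $2h - 1 = \tilde{O}(n/k) \le \poly(n)$ choices of $t$. This union bound inflates the failure probability by at most a polynomial factor in $n$, which is absorbed into the $n^{-\Omega(1)}$ bound by choosing a sufficiently large constant in the Chernoff exponent. The argument is entirely routine; there is no real obstacle, only the minor care of noting that $U$ itself is random, which is handled by conditioning on $U$ (and noting that the claim $|U| = \tilde{\Theta}(k)$ from Lemma~\ref{lem:center}'s surrounding discussion holds w.h.p., so the bound $O(\max\{\log n, |U|/h\})$ is meaningful).
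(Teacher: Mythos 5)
Your proof is correct and takes essentially the same route as the paper: bound the per-$u$ probability by $3/h$, observe that the indicators are independent given $U$, apply a Chernoff bound to get $O(\max\{\log n, |U|/h\})$, and union-bound over $v$ and $t$. The paper's version is terser (it does not spell out the conditioning on $U$, the two Chernoff regimes, or the union bound), but the underlying argument is identical.
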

\metaproof{\cref{lem:aux}}{
Given two nodes $u \in U$ and $v \in V$, and a fixed number $t$,
the probability that $\tau_u + \dist(u,v) -1 \leq t \leq \tau_u + \dist(u,v) + 1$ is at most $3 / h$.
Let $X$ be the total number of $u \in U$ such that $\tau_u + \dist(u,v) -1 \leq t \leq \tau_u + \dist(u,v) + 1$.
The expected value of $X$ can be upper bounded by $\mu =  |U| \cdot (3/h)$.
By a Chernoff bound, with high probability, $X = O(\max\{\log n, |U| / h\})$.
}

Recall that $|U| = \tilde{O}(k)$ with high probability, and $h = \tilde{O}(n/k)$.
By  Lemma~\ref{lem:aux}, we only need $\ceil{k^2 / n} \cdot \tilde{O}(1)$ space per each $v \in V$ to do the radius-$h$ BFS computation from all nodes $u \in U$. That is, the space complexity is $\tilde{O}(n + k^2)$. To store the  distance information   $\dist(u,v)$ for each $u \in U$ and $v \in U \cap N^h(u)$, we need $\tilde{O}(k^2)$ space. Thus, the algorithm for Step~1 costs $\tilde{O}(n + k^2)$ space. The number of passes is $2h-1 = \tilde{O}(k)$.

In the insertion-only model, the implementation is straightforward. In the turnstile model,  care has to be taken when implementing the above algorithm. 
We write $x = O(\max\{\log n, |U| / h\})$ to be the high probability upper bound on the number of BFS computation that a node participates in a single pass. We write $y = O(x \log n)$.
Let $U_1, U_2, \ldots, U_y$ be random subsets of $U$ such that  each $u \in U$ joins each $U_j$ with probability $1/x$, independently.
Consider a node $v \in V$ and consider the $r$th pass. Let $S$ be the subset of $U$ such that $u \in S$ if  $r = \tau_u +  \dist(u,v) - 1$, i.e., the BFS computation rooted at $u$ hits $v$ during the $r$th pass. We know that with high probability $|S| \leq x$.
By our choice of $U_1, U_2, \ldots, U_y$, we can infer that  with high probability for each $u \in S$ there is at least one index $j$ such that $S \cap U_j = \{u\}$. 

To implement the $r$th pass in the turnstile model, each node $v \in V$ virtually maintains $y$ edge set $Z_1, Z_2, \ldots, Z_y$. For each insertion (resp.,  deletion) of an edge $e = \{w,v\}$ satisfying $r = \tau_u +  \dist(u,w) - 2$ for some $u \in U_j$, we add (resp., remove) the edge from the set $Z_j$. After processing the entire data stream, we take one edge out of each edge set  $Z_1, Z_2, \ldots, Z_y$. In view of the above discussion, it suffices to only consider these edges when we grow the BFS trees. This can be implemented using $y$ $\ell_0$-samplers per each node $v \in V$, and the space complexity is still $\tilde{O}(ny) = \tilde{O}(n + k^2)$.

\paragraph{Step 2.}
At this moment we have computed $\dist(s,u)$ for each $u \in U$.
 Now we need to compute $\dist(s,t)$ for each $t \in V \setminus \{s\}$ by the formula $\dist(s,t) = \min_{u \in U \cap N^h(t)} \dist(s,u) + \dist(u,t)$.
 
In the insertion-only model, this task can be solved using $h$ iterations of Bellman-Ford steps. Initially, $d_0(v) = \dist(s,v)$ for each $v \in U$, and $d_0(v) = \infty$ for each $v \in V \setminus U$.
During the $i$th pass, we do the update $d_i(v) \leftarrow \min\{ d_{i-1}(v), \ 1+\min_{u \in N(v)} d_{i-1}(u)\}$. By \cref{eqn:dist1}, we can infer that $d_h(t) = \dist(s,t)$ for each $t \in V$. A straightforward implementation of this procedure costs  $\tilde{O}(n)$ space and $h = \tilde{O}(n/k)$ passes.

In the turnstile model, we can solve this task by growing a radius-$h$ BFS tree rooted at $u$, for each $u \in U$, as in Step~1. During the process, each node $v \in V$ maintains a variable $d(v)$ which serves as the estimate of $\dist(s,v)$. Initially, $d(v) \leftarrow \infty$. When the partial BFS tree rooted at $u \in U$ hits $v$, we update  $d(v)$ to be the minimum of the current value of $d(v)$ and $\dist(s,u) + \dist(u,v)$. At the end of the process, we have $d(v) = \min_{u \in U \cap N^h(t)} \dist(s,u) + \dist(u,v) = \dist(s,v)$ for each $v \in V$.
This costs  $\tilde{O}(n + k^2)$ space and $\tilde{O}(n/k)$ passes in view of the analysis of Step~1.

\subsection{Extensions}\label{sect:bfs-3}

In this section we consider the problem of solving $c$ instances of BFS simultaneously for some $c \le n$  and a simpler problem of computing the pairwise distance between the $c$ given nodes. Both of these problems can be solved via a black box application of \cref{thm:main-bfs}. In this section we show that it is possible to obtain better upper bounds.

\begin{theorem}\label{thm:c-distance}
Given an $n$-node undirected graph $G$, for any given parameters $1 \leq c \leq k \leq n$, the pairwise distances between all pairs of nodes in a given set of $c$ nodes in $G$ can be computed with probability $1-1/n^{\Omega(1)}$ using $\tilde{O}(n/k)$ passes and $\tilde{O}(n + k^2)$ space in the turnstile model.
\end{theorem}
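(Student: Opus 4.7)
The plan is to recycle Step~1 of the algorithm from \cref{thm:main-bfs} almost verbatim, and to show that once we have learned short-range distances between a sampled ``hitting set'' $U$, pairwise distances among the $c$ queried nodes follow essentially for free.

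Let $S$ be the set of $c$ queried nodes. First I would form $U$ by putting every node of $S$ into $U$ deterministically, and putting each node of $V\setminus S$ into $U$ independently with probability $k/n$. Since $c\le k$, a Chernoff bound gives $|U|=\tilde O(k)$ w.h.p. Applying \cref{lem:center} separately with each $s\in S$ as source (the extra deterministic inclusion of nodes in $S\setminus\{s\}$ can only help, because $U$ only grows), and taking a union bound over the $c$ sources, I get the following event with probability $1-n^{-\Omega(C)}$: for every pair $u,v\in S$ there is a $u$--$v$ shortest path $P_{u,v}$ such that every $h$-node subpath of it meets $U$, where $h=\tilde O(n/k)$.

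Next I would execute exactly Step~1 of the algorithm in \cref{sect:bfs-2}: for each $w\in U$, pick a random starting pass $\tau_w\in\{1,\dots,h\}$ and run a radius-$h$ BFS from $w$, where each node $v\in V$ only participates during the three consecutive passes $\tau_w+\dist(w,v)-1,\tau_w+\dist(w,v),\tau_w+\dist(w,v)+1$. \cref{lem:aux} bounds the per-node congestion at each pass by $\tilde O(1)$, so in the insertion-only model the storage is $\tilde O(n+k^2)$ (namely $\tilde O(1)$ congestion per node times $n$, plus $\tilde O(k^2)$ for the accumulated distances between nodes of $U$). In the turnstile model I would simulate this exactly as in \cref{sect:bfs-2}: subsample $U$ into $y=\tilde O(1)$ random subsets $U_1,\dots,U_y$, and maintain one $\ell_0$-sampler per node per subset so that with high probability each BFS computation receives at least one fresh edge per relevant pass. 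The pass count is $2h-1=\tilde O(n/k)$.

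At the end of Step~1, for every $w,w'\in U$ with $\dist_G(w,w')\le h$ we have learned $\dist_G(w,w')$ exactly, and I store all these values in a weighted graph $H$ on vertex set $U$; this fits in $\tilde O(k^2)$ space. By the property of $P_{u,v}$, for every $u,v\in S$ we may walk along $P_{u,v}$ and chop it at consecutive nodes of $U$, obtaining hops of length at most $h$ whose endpoints lie in $U$; hence each hop corresponds to an edge of $H$ of the correct weight and so $\dist_H(u,v)\le\dist_G(u,v)$. The reverse inequality is immediate because every edge of $H$ is realized by a path in $G$. Running an all-pairs shortest path computation inside $H$ (entirely in memory, no further streaming passes) yields $\dist_G(u,v)$ for all $u,v\in S$, as required.

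The only genuinely delicate step is controlling the turnstile simulation of Step~1, but this is handled by exactly the argument already given in \cref{sect:bfs-2}; adapting the analysis merely requires observing that deterministically including $S$ in $U$ does not break the $|U|=\tilde O(k)$ bound (because $c\le k$) and that the union bound over $c\le n$ sources in \cref{lem:center} is absorbed by a slightly larger choice of the constant $C$.
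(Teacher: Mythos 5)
Your proposal is correct and follows essentially the same route as the paper: deterministically including the $c$ query nodes in $U$, running only Step~1 of the BFS algorithm with random starting times to learn all $\dist(u,v)$ for $u,v\in U$ with $\dist(u,v)\le h$, and then composing these short-range distances (your explicit auxiliary graph $H$ plus in-memory APSP is exactly what the paper's repeated application of \cref{eqn:dist1} does implicitly). The only difference is that you spell out the union bound over the $c$ sources in \cref{lem:center} and the in-memory APSP step, both of which the paper leaves implicit.
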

\metaproof{\cref{thm:c-distance}}{
Let $S$ be the input node set of size $c$. Consider the modified Step~1 of our algorithm where each $s \in S$ is included in $U$ with probability 1. Since $|S| = c \leq k$, we still have  $|U| = \tilde{O}(k)$ with high probability. Recall that Step~1 of our algorithm calculates $\dist(u,v)$ for each $u \in U$ and $v \in U \cap N^h(u)$ in $\tilde{O}(n + k^2)$ space and $\tilde{O}(n/k)$ passes. Applying \cref{eqn:dist1} for each $s \in U$,  we obtain the pairwise distances between all pairs of nodes in $U$, which includes $S$ as a subset. There is no need to do Step~2.
}

For example, if $c = n^{1/2}$, then Theorem~\ref{thm:c-distance} implies that we can compute  the pairwise distances between all pairs of nodes in a given set of $c$ nodes   in  $\tilde{O}(n)$ space and $\tilde{O}(n^{1/2})$ passes.

\begin{theorem}\label{thm:c-bfs}
Given an $n$-node undirected graph $G$, for any given parameters $1 \leq c \leq k \leq n$, one can solve $c$ instances of BFS with probability $1-1/n^{\Omega(1)}$ using $\tilde{O}(n/k)$ passes and $\tilde{O}(cn + k^2)$ space in the turnstile model.
\end{theorem}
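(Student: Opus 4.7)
The plan is to reuse the machinery of Section~\ref{sect:bfs-2}, modified along the lines of Theorem~\ref{thm:c-distance} to accommodate multiple sources. Let $S = \{s_1, \dots, s_c\}$ denote the input source nodes. First I would sample $U$ by deterministically including every $s_i \in S$ and including each remaining node independently with probability $k/n$. Since $c \le k$, we still have $|U| = \tilde{O}(k)$ w.h.p., and by choosing the constant $C$ in Lemma~\ref{lem:center} large enough, its conclusion holds simultaneously for every $(s_i, v)$ pair via a union bound, so \cref{eqn:dist1} is valid for each source.

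Step~1 is run exactly as in Section~\ref{sect:bfs-2}, computing $\dist(u,v)$ for every $u \in U$ and $v \in U \cap N^h(u)$ using $\tilde{O}(n + k^2)$ space and $\tilde{O}(n/k)$ passes; the scheduling analysis of Lemma~\ref{lem:aux} is unaffected because $|U|$ is still $\tilde{O}(k)$. From this local distance information I would build in memory an auxiliary weighted graph on $U$ whose edge weights are the known $\dist(u,v)$ values, and run single-source shortest paths on it from each $s_i \in S$ (each $s_i$ lies in $U$ by construction). This yields $\dist(s_i, u)$ for every $s_i \in S$ and $u \in U$; the resulting table has $|S| \cdot |U| = \tilde{O}(ck) = \tilde{O}(k^2)$ entries, which fits in the Step~1 budget.

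Step~2 is modified to handle all sources at once. I would again grow radius-$h$ BFS trees from every $u \in U$ with the randomized start times of Section~\ref{sect:bfs-2}, but now each node $v \in V$ maintains $c$ distance estimates $d_1(v), \dots, d_c(v)$, all initialized to $\infty$. When the BFS rooted at $u$ reaches $v$, the node $v$ learns $\dist(u,v)$ and, using the stored values $\dist(s_i, u)$ from Step~1, performs the update $d_i(v) \leftarrow \min\{d_i(v),\ \dist(s_i, u) + \dist(u,v)\}$ for every $i \in \{1, \dots, c\}$. Applying \cref{eqn:dist1} separately to each source then gives $d_i(v) = \dist(s_i, v)$ w.h.p.\ at the end. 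The $c$ estimates per node cost $\tilde{O}(cn)$ extra space on top of the $\tilde{O}(n + k^2)$ used for the BFS trees, giving the claimed $\tilde{O}(cn + k^2)$ total. Converting the $c$ distance labelings into actual parent-pointer trees takes $O(\log n /\log\log n)$ additional passes using $O(\log n)$ $\ell_0$-samplers per source per node, still within $\tilde{O}(cn)$ space.

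The main point to verify is that forcing $c$ source nodes deterministically into $U$ does not break any of the space or scheduling analyses. This is precisely where the hypothesis $c \le k$ is used: it keeps $|U|$ asymptotically the same as in Section~\ref{sect:bfs-2}, preserving the congestion bound of Lemma~\ref{lem:aux}, and it ensures $ck = O(k^2)$ so that the source-to-$U$ distance table fits inside the $\tilde{O}(k^2)$ memory budget of Step~1.
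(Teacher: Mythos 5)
Your proposal matches the paper's proof essentially step for step: force the $c$ sources into $U$, note $|U| = \tilde{O}(k)$ still holds since $c \le k$, run Step~1 unchanged to get pairwise distances among $U$ (and hence $\dist(s_i, u)$ for all $s_i \in S$, $u \in U$), and then in Step~2 have each node maintain $c$ distance estimates updated as the radius-$h$ BFS trees sweep by, for $\tilde{O}(cn)$ extra space. The paper is a bit terser and defers the conversion from distance labels to parent pointers to the general discussion at the start of Section~\ref{sect.bfs}, but the underlying argument is the same.
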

\metaproof{\cref{thm:c-bfs}}{
Let $S$ be the node set of size $c$ corresponding to the roots of the $c$ BFS instances. Consider the following modifications to our BFS algorithm.

Same as the proof of Theorem~\ref{thm:c-distance}, in Step~1, include each $s \in S$ in $U$ with probability 1. The modified Step~1 still takes $\tilde{O}(n + k^2)$ space and $\tilde{O}(n/k)$ passes, and it outputs the pairwise distances between all pairs of nodes in $U$.

Now consider Step~2. In the insertion-only model, remember that a BFS tree rooted at a node $s \in S$ can be constructed in  ${O}(n)$ space and $h = \tilde{O}(n/k)$ passes using $h$ iterations of Bellman-Ford steps. The cost of  constructing all $c$ BFS trees is then  ${O}(cn)$ space and $\tilde{O}(n/k)$ passes.

In the turnstile model, we can also use the strategy of   growing a radius-$h$ BFS tree rooted at $u$, for each $u \in U$. During the process, each node $v \in V$ maintains $c$ variables  serving as the estimates of $\dist(s,v)$, for all $s \in S$.  The complexity of growing radius-$h$ BFS trees is still $\tilde{O}(n + k^2)$ space and $\tilde{O}(n/k)$ passes. The extra space cost for maintaining these $cn$ variables is $O(cn)$.
}

For example, if $c = n^{1/3}$, then Theorem~\ref{thm:c-bfs} implies that we can solve $c$ instances of BFS  in  $\tilde{O}(n^{4/3})$ space and $\tilde{O}(n^{1/3})$ passes. Note that the space complexity of $\tilde{O}(n^{4/3})$ is necessary to output $c = n^{1/3}$ BFS trees. 

\cref{thm:c-bfs} immediately gives the following corollary. 

\begin{corollary}\label{cor:Steiner}
Given an $n$-node connected undirected graph $G$ with unweighted edges and a $c$-node subset $S$ of $G$, for any given parameters $1 \le c \le k \le n$, finding a Steiner tree in $G$ that spans $S$ can be approximated to within a factor of $2$ with probability $1-1/n^{\Omega(1)}$ using $\tilde{O}(n/k)$ passes and $\tilde{O}(cn+k^2)$ space in the turnstile model.
\end{corollary}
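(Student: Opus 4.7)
The plan is to implement the classical 2-approximation for Steiner tree due to Kou, Markowsky, and Berman on top of the batched BFS algorithm from \cref{thm:c-bfs}. The key observation is that \cref{thm:c-bfs} actually produces, for each terminal $s \in S$, an explicit BFS tree $T_s$ rooted at $s$ (the discussion at the start of \cref{sect.bfs} shows that parent pointers can be recovered from distances within the stated pass and space budget, using $\ell_0$-samplers in the turnstile case), and the space bound $\tilde{O}(cn + k^2)$ is already large enough to keep all $c$ such trees in memory as parent-pointer arrays.

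First, invoke \cref{thm:c-bfs} with the $c$ terminals in $S$ as the sources, obtaining the trees $\{T_s\}_{s \in S}$ in $\tilde{O}(n/k)$ passes and $\tilde{O}(cn + k^2)$ space with probability $1 - 1/n^{\Omega(1)}$. This is the only interaction with the stream. From these trees I can read off $\dist_G(s,t)$ for every pair $s,t \in S$ and, by following parent pointers in $T_s$, reconstruct an explicit $s$-to-$t$ shortest path in $G$ without any further pass.

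The remaining steps are entirely in-memory. Build the complete graph $G'$ on $S$ weighted by the pairwise distances, compute its minimum spanning tree $M$, and replace each edge $\{s,t\}$ of $M$ by the shortest path extracted from $T_s$ to obtain a connected subgraph $H \subseteq G$ that contains $S$. A spanning tree of $H$ with any non-terminal leaves iteratively pruned is the Steiner tree we output. The standard doubling-and-shortcutting argument gives $w(M) \le 2 \cdot \mathrm{OPT}$, where $\mathrm{OPT}$ is the weight of an optimal Steiner tree spanning $S$, and the total weight of $H$ (and hence of any spanning tree of it) is at most $w(M)$, establishing the $2$-approximation.

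The only real obstacle is bookkeeping: one must confirm that storing all $c$ BFS trees, the $c \times c$ distance table, the MST $M$, and the union $H$ of reconstructed paths all fit in $\tilde{O}(cn + k^2)$ space, and that reconstructing the paths requires no additional streaming pass. Both are immediate once the BFS trees are in memory, since each stored tree occupies $O(n)$ words, the MST on $c$ nodes occupies $O(c^2) \le O(cn)$ space, and $H$ consists of at most $c-1$ paths of length at most $n$, so the offline phase fits within the claimed bounds and contributes zero passes.
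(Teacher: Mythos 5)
Your proposal is correct and matches the paper's intent: the paper merely remarks that the corollary follows immediately from Theorem~\ref{thm:c-bfs} (with the accompanying note that Theorem~\ref{thm:c-distance} suffices if only the size of the Steiner tree is needed), and the implied algorithm is exactly the classical metric-closure/MST 2-approximation that you spell out. You also correctly identify the two points worth verifying, namely that the parent-pointer recovery discussed at the start of Section~\ref{sect.bfs} fits within the $\tilde{O}(cn+k^2)$ space and $\tilde{O}(n/k)$ pass budget for all $c$ sources simultaneously, and that all metric-MST and path-reconstruction work is offline and pass-free once the $c$ BFS trees are in memory.
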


Note that if we do not need to construct a Steiner tree, and only
 need to  approximate the size of an optimal Steiner tree, then \cref{thm:c-distance} can be used in place of  \cref{thm:c-bfs}.

\subsection{Diameter Approximation}\label{sect:bfs-4}

It is well-known that the maximum distance label in a BFS tree gives a $2$-approximation of diameter. We show that it is possible to improve the approximation ratio to nearly $1.5$ without sacrificing the space and pass complexities.

Roditty and Williams~\cite{RodittyW13} showed that a nearly $1.5$-approximation of diameter can be computed with high probability as follows. 
\begin{enumerate}
    \item Let $S_1$ be a node set chosen by including each node $v \in V$ to  $S_1$ with probability $p =  (\log n) /
    \sqrt{n}$ independently. Perform a BFS from each node $v \in S_1$.
    \item Let $v^\star$ be a node chosen to maximize $\dist(v^\star, S_1)$. Break the tie arbitrarily.  Perform a BFS from $v^\star$.
    \item Let $S_2$ be the node set consisting of the $\sqrt{n}$ nodes closest to $v^\star$, where ties are broken arbitrarily.  Perform a BFS from each node $v \in S_2$.
\end{enumerate}
Let $D^\ast$ be the maximum distance label ever computed during the BFS computations in the above procedure. Roditty and Williams~\cite{RodittyW13} proved that $D^\ast$ satisfies that $\lfloor 2D/3 \rfloor  \leq   D^\ast   \leq  D$, where $D$ is the diameter of $G$.

The algorithm of Roditty and Williams~\cite{RodittyW13} can be implemented in the streaming model by applying Theorem~\ref{thm:c-bfs} with $c = \tilde{O}(\sqrt{n})$, but we can do better.
Note that when we perform  BFS from the nodes in $S_1$ and $S_2$, it is not necessary to store the entire BFS trees. For example, in order to select $v^\ast$, we only need to let each node $v$ know $\dist(v, S_1)$, which is the maximum distance label of $v$ in all BFS trees computed in Step~1.
Therefore, the $O(cn)$ term in the space complexity of  Theorem~\ref{thm:c-bfs} can be improved to $O(n)$. That is, the space and pass complexities are the same as the cost for computing a \emph{single} BFS tree using \cref{thm:main-bfs}.  We conclude the following theorem.

\begin{theorem}\label{thm:diam-apx}
Given an $n$-node connected undirected graph $G$, a diameter approximation $D^\ast$ satisfying  $\lfloor 2D/3 \rfloor  \leq   D^\ast   \leq  D$, where $D$ is the diameter of $G$,  can be computed with probability $1-1/n^{\Omega(1)}$ in $p$ passes using $\tilde{O}((n/p)^2)$ space, for each $1 \leq p \leq \tilde{O}(\sqrt{n})$ in the turnstile model.
\end{theorem}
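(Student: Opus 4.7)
The plan is to implement the three‑phase algorithm of Roditty and Williams in the streaming model by re‑using the BFS pipeline underlying \cref{thm:c-bfs}, but with a key modification: instead of maintaining all $c$ distance labels at each node, each node will store only the aggregated distance information that the Roditty--Williams procedure actually consumes. This eliminates the $O(cn)$ overhead in \cref{thm:c-bfs} and keeps the space at $\tilde{O}(n + k^2)$, which is the same as running a \emph{single} BFS via \cref{thm:main-bfs}.

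Concretely, I would set $k = n/p$ and $c = \tilde{O}(\sqrt{n})$; since $p \le \tilde{O}(\sqrt{n})$, we have $c \le k$, so the hypotheses of \cref{thm:c-bfs} are met. Recall that Step~1 of the BFS pipeline from \cref{sect:bfs-2} samples $U \supseteq S$ (with each non‑forced node included independently with probability $k/n$) and computes $\dist(u,v)$ for every $u \in U$ and $v \in U \cap N^h(u)$ in $\tilde{O}(n + k^2)$ space and $\tilde{O}(n/k) = \tilde{O}(p)$ passes; Step~2 then propagates distances from a source set $S \subseteq U$ to every node via the identity $\dist(s,t) = \min_{u \in U \cap N^h(t)}(\dist(s,u) + \dist(u,t))$. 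I would run this pipeline three times, once for each Roditty--Williams phase:
\begin{description}
\item[Phase 1.] Take $S = S_1$, the random set of size $\tilde{O}(\sqrt{n})$ forced into $U$. When executing the Step~2 propagation, each node $v$ keeps only the single value $\min_{s \in S_1}\dist(s,v) = \dist(v, S_1)$, overwriting it whenever a smaller candidate arrives. At the end of the phase, the node maximising this value is chosen as $v^\ast$, and the running maximum of all distance labels encountered so far is recorded.
\item[Phase 2.] Re‑run the pipeline with $S = \{v^\ast\}$ to obtain $\dist(v^\ast, v)$ at every $v$. From these values we select $S_2$ as the $\sqrt{n}$ nodes closest to $v^\ast$, again updating the global running maximum.
\item[Phase 3.] Re‑run the pipeline with $S = S_2$, and have each $v$ store only $\min_{s \in S_2}\dist(s,v)$ together with the maximum distance label it has ever been assigned. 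The final $D^\ast$ is the global running maximum across all three phases.
\end{description}

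Correctness (the sandwich $\lfloor 2D/3 \rfloor \le D^\ast \le D$) follows directly from the Roditty--Williams analysis, since $D^\ast$ equals the maximum distance label produced by BFS trees rooted at $S_1 \cup \{v^\ast\} \cup S_2$. The success probability $1 - 1/n^{\Omega(1)}$ is inherited from \cref{lem:center} (applied three times with a large constant $C$) together with the standard Chernoff bound that $|S_1| = \tilde{O}(\sqrt{n})$ w.h.p. The pass count is $3 \cdot \tilde{O}(n/k) = \tilde{O}(p)$, absorbed into $p$ by adjusting polylog factors.

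The main obstacle I anticipate is not correctness but the bookkeeping needed to certify the $\tilde{O}((n/p)^2)$ space bound in the turnstile model: one must check that the $\ell_0$‑sampler machinery inherited from \cref{sect:bfs-2} continues to work when each node only keeps an aggregated minimum, and that the scheduling of the $\tilde{O}(k)$ partial radius‑$h$ BFS trees from $U$ still obeys the $\tilde{O}(1)$ congestion bound of \cref{lem:aux} across all three phases. Once these are verified, the space decomposes as $\tilde{O}(n)$ for the aggregated per‑node state, $\tilde{O}(k^2)$ for the pairwise distances inside $U$, and $\tilde{O}(n)$ more for the running‑maximum labels, totalling $\tilde{O}(n + k^2) = \tilde{O}((n/p)^2)$ as claimed.
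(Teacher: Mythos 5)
Your proposal mirrors the paper's proof almost exactly: run the Roditty--Williams three-phase procedure on top of the BFS pipeline of \cref{sect:bfs-2}, and observe that each node needs only aggregated distance information rather than all $c$ distance labels, so the $O(cn)$ term of \cref{thm:c-bfs} drops to $O(n)$. The phase structure, the choice $c = \tilde{O}(\sqrt{n}) \le k = n/p$, and the appeal to \cref{lem:center} and the Chernoff bound for the success probability are all the same ingredients the paper uses.

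One point in your write-up deserves a more careful look, and it is a point the paper is equally terse about. Roditty--Williams define $D^\ast$ as the maximum distance label across \emph{all} the BFS trees, i.e.\ $\max_{s} \ecc(s)$ over $s \in S_1 \cup \{v^\ast\} \cup S_2$. Your Phase 1 has each $v$ store only $\dist(v,S_1) = \min_{s \in S_1}\dist(s,v)$ and then ``record the running maximum of all distance labels encountered so far.'' If ``labels encountered'' means the aggregated per-node values, their maximum is $\max_v \min_s \dist(s,v) = \dist(v^\ast, S_1)$, which is in general strictly smaller than $\max_{s \in S_1}\ecc(s)$; if it means the raw candidates $\dist(s,u)+\dist(u,v)$ produced before minimisation over $u$, their running max can overshoot $D$. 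Neither interpretation recovers $\max_{s\in S_1}\ecc(s)$, because $\max_s$ and $\min_u$ do not commute in $\dist(s,v) = \min_u(\dist(s,u)+\dist(u,v))$, so computing $\max_{s,v}\dist(s,v)$ genuinely seems to require more than $O(1)$ aggregated state per node. The same concern applies to your Phase 3 description of ``the maximum distance label it has ever been assigned.'' This is the actual obstacle; it is a correctness question for the $\lfloor 2D/3\rfloor$ lower bound on $D^\ast$, not (as you anticipated) just turnstile bookkeeping for the $\ell_0$-samplers or the congestion bound of \cref{lem:aux}. If you believe the Roditty--Williams guarantee already follows from the smaller quantities $\max_v\dist(v,S_1)$, $\ecc(v^\ast)$, and $\max_v\dist(v,S_2)$ alone, that claim needs to be argued explicitly; otherwise you need to show how $\max_{s\in S_1 \cup S_2}\ecc(s)$ can be extracted within the stated space bound.
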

\section{Depth-First Search}\label{sect.dfs}
A straightforward implementation of the naive DFS algorithm in the streaming model costs either $n-1$ passes with $\tilde{O}(n)$ space or a single pass with $O(n^2)$ space.
Khan and Mehta~\cite{khan2019depth} recently showed that it is possible to obtain a smooth tradeoff between the two extremes. Specifically, they designed an algorithm that requires at most
$\ceil{n/k}$ passes  using $\tilde{O}(nk)$ space, where $k$ is any positive integer. Furthermore, for the case the height $h$ of the computed DFS tree is small, they further decrease the  number of passes to $\ceil{h/k}$.  In Section~\ref{sect-dfs1}, we will provide a very simple alternative proof of this result, via sparse certificates for $k$-node-connectivity.

In the worst case, the ``space $\times$ number of passes'' of the algorithms of Khan and Mehta~\cite{khan2019depth} is still $\tilde{O}(n^2)$.
In Sections~\ref{sect-dfs2} and~\ref{sect-dfs3}, we will show that it is possible to improve  this upper bound asymptotically when the number of passes $p$ is super-constant. Specifically, for any parameters $1 \leq s \leq k \leq n$, we obtain the following DFS algorithms.
\begin{itemize}
    \item A deterministic algorithm using $\tilde{O}((n/k) + (k/s))$ passes and $\tilde{O}(ns)$ space in the insertion-only model. After balancing the parameters, the space complexity is $\tilde{O}(n^2  / p^2)$ for $p$-pass algorithms, for each $1 \leq p \leq \tilde{O}(\sqrt{n})$.
    \item A randomized algorithm using $\tilde{O}((n/k) + (k/s))$ passes and $\tilde{O}(n s^2)$ space in the turnstile model. After balancing the parameters, the space complexity is $\tilde{O}(n^3 / p^4)$ for $p$-pass algorithms, for each $1 \leq p \leq \tilde{O}(\sqrt{n})$.
\end{itemize}

\subsection{A Simple DFS Algorithm \label{sect-dfs1}}

In this section, we present a simple alternative proof of the result of Khan and Mehta~\cite{khan2019depth} that a DFS tree can be constructed in 
$\ceil{h/k}$ passes  using $\tilde{O}(nk)$ space, for any given parameter $k$, where $h$ is the height of the computed DFS tree.

\paragraph{Sparse Certificate for $s$-Node-Connectivity.}
A \emph{strong $s$-VC certificate} of a graph $H$ is its subgraph $K$ such that for any supergraph $G$ of $H$, for every pair of nodes $s^\ast, t^\ast \in G$, if  they are $c$-node-connected in $G$, then they are $c'$-node-connected for some $c' \ge \min\{s, c\}$ in the graph obtained from $G$ by replacing its subgraph $H$ with $K$. 
A \emph{sparse}   strong $s$-VC certificate of the graph $G$ is exactly what we need here.  Eppstein, Galil, Italiano, and Nissenzweig~\cite{EppsteinGIN97} showed that such a sparse subgraph of $O(ns)$ edges can be computed in a single pass with $\tilde{O}(ns)$ space \emph{deterministically} in the insertion-only model. In the turnstile model,  Guha,  McGregor, and Tench~\cite{GuhaMT15} showed that such a sparse subgraph of $\tilde{O}(ns^2)$ edges can be computed with high probability in a single pass using $\tilde{O}(ns^2)$ space. This result can be inferred from Theorem~8 of~\cite{GuhaMT15} with $\epsilon  = \Theta(1/s)$. In~\cite{GuhaMT15} the analysis only considers the case $G = H$, but it is straightforward to extend the analysis to incorporate any supergraph $G$ of $H$.

If the subgraph $K$ of the graph $H$ satisfies the above requirement for the special case of $G = H$, then $K$ is said to be a \emph{$s$-VC certificate} of $H$. Our simple DFS algorithm relies on this tool.

\begin{lemma}\label{lem:simple-dfs-aux}
Suppose $K$ is a $(k+1)$-VC certificate of $H$. Let $T$ be any DFS tree of $K$. Consider any two nodes $u$ and $v$ such that the least common ancestor $w$ of $u$ and $v$ are within the top $k$ layers of $T$. If $w \neq u$ and $w \neq v$, then $u$ and $v$ are not adjacent in $H$.
\end{lemma}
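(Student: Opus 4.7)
The plan is to argue by contradiction. Assume $u$ and $v$ are adjacent in $H$, and split into two cases depending on whether this edge also lives in $K$.

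First, if $(u,v) \in E(K)$, then since $T$ is a DFS tree of $K$, the edge $(u,v)$ must be either a tree edge or a back edge (the only two kinds possible for an undirected DFS tree), and both force one of $u,v$ to be an ancestor of the other in $T$. But then the least common ancestor of $u$ and $v$ would equal $u$ or $v$, contradicting $w \notin \{u,v\}$.

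The main case is $(u,v) \in E(H) \setminus E(K)$. Let $A$ consist of $w$ together with all proper ancestors of $w$ in $T$; since $w$ lies within the top $k$ layers, $|A| \le k$. The heart of the proof is to show that $A$ is a $u$-$v$ separator in $K$. Because $w$ is the least common ancestor of $u$ and $v$ and differs from both, $u$ lies in the subtree rooted at some child $c_u$ of $w$ and $v$ lies in the subtree of a different child $c_v$. The ancestor-descendant property of undirected DFS trees implies that every edge of $K$ incident to a node in the subtree of $c_u$ either stays inside that subtree or lands on a proper ancestor of $c_u$ on the root path, and any such ancestor is contained in $A$. Hence no path in $K - A$ starting at $u$ can escape the subtree of $c_u$ to reach $v$, so $A$ is a separator of size at most $k$.

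On the other hand, $u$ being adjacent to $v$ in $H$ means that no set of internal vertices can separate them in $H$, so in particular they are $(k+1)$-node-connected in $H$. Applying the defining property of a $(k+1)$-VC certificate with $G = H$ then forces $u$ and $v$ to remain $(k+1)$-node-connected in $K$, contradicting the existence of a separator of size at most $k$ constructed above. The main subtlety I foresee is phrasing the DFS ancestor-descendant property crisply enough that the set $A$ is manifestly a separator; once that is in place, everything reduces to a direct invocation of the certificate definition.
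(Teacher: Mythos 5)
Your proof is correct and follows essentially the same route as the paper's: both arguments observe that any edge of $K$ between $u$ and $v$ would be a tree or back edge (forcing an ancestor--descendant relation), and both use the DFS property to exhibit the (at most $k$) common ancestors of $u$ and $v$ as a $u$-$v$ separator in $K$, then derive a contradiction from the $(k+1)$-VC certificate definition applied with $G = H$. The paper phrases the last step in terms of the $u$-$v$ node-connectivities $c_H \ge c_K + 1$ and $c_K \le k$, while you invoke directly that adjacent vertices cannot be separated, but these are the same argument under the standard connectivity convention.
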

\metaproof{\cref{lem:simple-dfs-aux}}{
Suppose $u$, $v$, and $w$ violate the statement of the lemma. That is, $u$ and $v$ are adjacent in $H$. Since $T$ is a DFS tree, $u$ and $v$  are not adjacent in $K$, and each path connecting $u$ and $v$ must pass through a node that is a common ancestor of $u$ and $v$. 
Let $c_H$ (resp., $c_K$) be the $u$-$v$ node-connectivity in $H$ (resp., $K$).
The above discussion implies that $c_K \leq k$ and $c_H \geq c_K + 1$, contradicting the assumption that  $K$ is a $(k+1)$-VC certificate of $H$. 
}

\paragraph{Algorithm.} Using Lemma~\ref{lem:simple-dfs-aux}, we can construct a DFS tree of $G$ recursively as follows. Pick $K$ as a $(k+1)$-VC certificate of $G$. Compute a DFS tree $T$ of $K$. Let $T'$ be the tree induced by the  top $k+1$ layers of  of $T$. Let $v_1, v_2, \ldots, v_z$ be the leaves of $T'$. Denote $S_i$ as the set of descendants of $v_i$ in $T$, including $v_i$.
By Lemma~\ref{lem:simple-dfs-aux}, there exists no edge in $G$ that crosses two distinct sets $S_i$ and $S_j$. 
For each $1 \leq i \leq z$, we  recursively find a DFS tree $T_i$ of the subgraph of $G$ induced by $S_i$ rooted at $v_i$. By the above observation, we can obtain a valid DFS tree of $G$ by appending $T_1, T_2, \ldots, T_z$ to $T'$.

\paragraph{Analysis.} If the height of the final DFS tree is $h$, then the depth of the recursion is at most $\ceil{h/k}$. The cost for computing a $(k+1)$-VC certificate is $1$ pass and $\tilde{O}(nk)$  space, and the resulting subgraph $K$ has $O(nk)$ edges. Therefore, the  total number of passes is at most $\ceil{h/k}$, and the overall space complexity is $\tilde{O}(nk)$.

\subsection{Streaming Implementation of the  Algorithm of Aggarwal and Anderson \label{sect-dfs2}}

The bounds of Theorem~\ref{thm:main-dfs} are attained via an implementation of the parallel DFS algorithm of Aggarwal and Anderson~\cite{Aggarwal1988} in the streaming model, with the help of various tools, including the strong sparse certificates for $s$-node-connectivity described above. 

\paragraph{Overview.} At a high level, the DFS algorithm of Aggarwal and Anderson~\cite{Aggarwal1988} works as follows. Start with a maximal matching, and then merge these length-1 paths iteratively into a constant number of node-disjoint paths such that the number of nodes not in any path is at most $|V| / 2$. The algorithm then constructs the initial segment of the DFS tree from these paths. Each remaining connected component is solved recursively. The final DFS tree is formed by appending the DFS trees of recursive calls to the initial segment. 

The bottleneck of this DFS algorithm is a task called 
$\maximalpaths$ which is a variant of the maximal node-disjoint paths problem between a set of source nodes $S$ and a set of sink nodes $T$. In this variant, each member of $S$ is a path instead of a node.
Goldberg, Plotkin, and Vaidya~\cite{GPV93} gave a parallel algorithm for this problem. Their algorithm has two phases. 
For any given parameter $k$, they showed that after $k$ iterations of the algorithm of the first phase, the number of sources in $S$ that are still \emph{active} is at most $n / k$. These remaining active sources are processed one-by-one in the second phase.
Using this approach with $k = \sqrt{n}$,  $\maximalpaths$  can be solved in  the streaming model with $\tilde{O}(\sqrt{n})$ passes and $\tilde{O}(n)$ space. To further reduce the pass complexity, we apply the sparse certificates for $s$-node-connectivity of
Eppstein, Galil, Italiano, and Nissenzweig~\cite{EppsteinGIN97}
and Guha,  McGregor, and Tench~\cite{GuhaMT15}, which allow us to process the remaining active sources in batches.
In the insertion-only model, we obtain a deterministic $p$-pass algorithm with space complexity $\tilde{O}(n^2/p^2)$, for each  $1 \leq p \leq  \tilde{O}(\sqrt{n})$.
For the more challenging turnstile model, we obtain a randomized algorithm  with a somewhat worse space complexity of  $\tilde{O}(n^3/p^4)$.


\paragraph{The DFS Algorithm of Aggarwal and Anderson.} Specifically, the DFS algorithm of Aggarwal and Anderson~\cite{Aggarwal1988} is based on the following divide-and-conquer approach. The goal is to find a DFS tree of $G$ rooted at a given node $r$. To do so,  Aggarwal and Anderson~\cite{Aggarwal1988} devised an algorithm that finds a subtree $T$, called \emph{initial segment}, rooted at   $r$,  satisfying the following properties:
\begin{itemize}
    \item Each of the connected components $C_1, C_2, \ldots, C_z$ of $G \setminus T$ has at most $n/2$ nodes.
    \item The subtree $T$ can be extended to a DFS tree of $G$ as follows. For each connected component $C_i$, there is a unique node $v_i \in T$ of the largest depth in $T$ that is adjacent to nodes in $C_i$. Choose $r_i$ to be any node in $C_i$ adjacent to $v_i$. For each $1 \leq i \leq z$, append to $v_i$ any DFS tree of $C_i$ rooted at $r_i$.
\end{itemize}

It is clear that this gives a recursive algorithm with a logarithmic depth of recursion. In the insertion-only model, finding the portals $v_i$ and $r_i$ is straightforward and can be done in a single pass with $z = \tilde{O}(n)$ space, simultaneously for all $i = 1, \ldots, z$. In the turnstile model, we  employ a binary search on the depth of $v_i$ in $T$, and this costs $O(\log n)$ passes with $z = \tilde{O}(n)$ space. 

\paragraph{Constructing the Initial Segment.}
The initial segment $T$ is constructed in two steps. The first step is to find a set of node-disjoint paths  $Q$ of size at most 11, called \emph{separator}, such that each connected component of the subgraph induced by all nodes not in a path of $Q$ has at most $n/2$ nodes. 

The second step is to construct $T$ from $Q$ as follows. Initially, the subtree $T \leftarrow r$ consists of only the root node. While $Q$ is not empty, we extend the current subtree $T$ as follows. Find a path $p$ connecting a node $u$ in a path of $Q$ to a node $v$ in $T$ such that all intermediate nodes of $p$ are not in a path of $Q$ and are not in $T$. The path $p$ is chosen such that the depth of $v$ is the largest possible. Let $p'=(s, \ldots, u, \ldots, t) \in Q$ be the path that contains $u$. Extend the subtree $T$ by appending to $v$ the path $p = (v, \ldots, u)$ and the longer one the two subpaths $(s, \ldots, u)$ and $(u, \ldots, t)$ of $p'$. Then update $Q$ by removing from $p'$ the part that has been added to $T$. It is clear that $Q$ becomes empty after $O(\log n)$ iterations.

Implementation of the above procedure to the streaming model is also straightforward. We  do a binary search on the depth $d^\ast$ of $v$ to find  the path $p$. Specifically, for a parameter $d$, consider the subgraph $G_d$ induced by all nodes in $G$ except the ones in $T$ of depth greater than $d$. Compute any spanning forest $T_d$ of $G_d$. If all nodes in the paths of $Q$ are not reachable to all nodes in $T$ in the spanning forest $T_d$, then we know that $d < d^\ast$; otherwise $d \geq d^\ast$. After we have determined $d = d^\ast$, it suffices to pick $p$ as any minimal-length path connecting $T$ and $Q$ in the spanning forest $T_{d^\ast}$. The construction of a spanning forest can be done in a single pass with $\tilde{O}(n)$ space in the insertion-only model. For the turnstile model, we use the algorithm of Ahn, Guha, and McGregor~\cite{AhnGM12}, which also costs  $\tilde{O}(n)$ space and finishes in a single pass.

\paragraph{Constructing the Separator.} The algorithm for constructing $Q$ is as follows. At the beginning, $Q$ is initialized as any maximal matching. Obviously, each connected component induced by nodes not involved in $Q$ is a single node, but $|Q|$ can be as large as linear in $n$. The size of the set $Q$ can be decreased to at most $11$ by repeatedly applying the procedure $\reduce(Q)$ for $O(\log n)$ times.

If we are given a set of node-disjoint paths $Q$ such that $|Q| \geq 12$ and each connected component induced by nodes not involved in $Q$ has at most $n/2$ nodes, the  procedure $\reduce(Q)$ of~\cite{Aggarwal1988} is guaranteed to output a new set of node-disjoint paths $Q'$ such that $|Q'| \geq (11/12) |Q|$ and each connected component induced by nodes not involved in $Q'$ also has at most $n/2$ nodes.

Note that a maximal matching can be found via a greedy algorithm in a single pass with $\tilde{O}(n)$ space in the insertion-only model. In the turnstile model, a maximal matching can be found with high probability in $O(\log n)$ passes with  $\tilde{O}(n)$ space by implementing the parallel maximal matching algorithm of Israeli and Itai~\cite{II86} using $\ell_0$-samplers. 

\paragraph{Finding Node-Disjoint Paths.}
The detailed description of $\reduce(Q)$ is omitted.   All of  $\reduce(Q)$ can be implemented in the streaming model in $\tilde{O}(1)$ passes and $\tilde{O}(n)$ space, except the following task, called $\maximalpaths$~\cite{GPV93}. The input of $\maximalpaths$ consists of a set of source nodes $S \subseteq V$, a set of sink nodes $T \subseteq V$, and a set of node-disjoint directed paths $\Pin$ in $G$, where each source node $v \in S$ is the starting node of a path $P \in \Pin$.  The output of $\maximalpaths$ is a set of node-disjoint paths in $G$ such that each $P \in \Pout$ is of the form $P = s \circ P_1 \circ P_2 \circ t$ such that (i) $s \in S$, (ii) $t \in T$, (iii) $s \circ P_1$ is the prefix of some path in $\Pin$, and (iv) $P_2$ is a path that does not involve any nodes used in $\Pin$ and $T$. Note that $P_1$ and $P_2$ might be empty. The set $\Pout$ has to satisfy the following maximality constraint. For each node $v$ in a path of $\Pin$ but not in a path of $\Pout$, any path connecting $v$ to a sink node-intersects a path in $\Pout$. 

Note that in~\cite{Aggarwal1988} the sinks $T$ are node-disjoint paths, not individual nodes. Here each node in $T$ corresponds to the result of contracting each of these paths into a node.
Goldberg, Plotkin, and Vaidya~\cite{GPV93} showed that $\maximalpaths$ can be solved in two stages as follows.

\paragraph{First Stage.}
In the first stage, each node has three possible states: $\{\idle, \aactive, \dead\}$. Intuitively, the  $\dead$ nodes are the ones that will not be considered in subsequent steps of the algorithm. The set of \emph{active paths} $\Pactive$ is initialized as $\Pin$. All nodes in a path of  $\Pactive$ are $\aactive$. All remaining nodes are initially $\idle$.

In each iteration, the set of active paths $\Pactive$ are updated as follows. Let $H$ be the set of the last nodes in a path in  $\Pactive$. Let $H'$ be the set of $\idle$ nodes. Find a maximal matching $M$ on the bipartite graph induced by the two parts $H$ and $H'$. If a path $P \in \Pactive$ is incident to a matched edge $e=\{u,v\} \in M$, then $P$ is extended by appending $e =\{u,v\}$ to the last node $u$ of $P$, and the state of $v$ is updated to $\aactive$. Otherwise, the last node $u$ of $P  \in \Pactive$ is removed from $P$, and the state of  $u$  is  updated to $\dead$. 

A source is successfully connected to a sink when there is a path $P \in \Pactive$ that reaches a sink node. When this occurs, the entire path $P$ is removed from $\Pactive$ and is added to $\Pout$. All  nodes of $P$ are then $\dead$, as they should not be considered in subsequent steps.

The first stage terminates once $|\Pactive| < k$, where $k$ is a given parameter. 
Observe that the number of iterations can be upper bounded by $2n/k$, as  the number of nodes that change their states in an iteration is at least the number of active paths at the beginning of this iteration, and each node $v \in V$ can change its state at most twice.

Now consider the implementation in the streaming model. Recall that a maximal matching can be found deterministically in a single pass with $\tilde{O}(n)$ space in the insertion-only model, or in the turnstile model with high probability in $O(\log n)$ passes with  $\tilde{O}(n)$ space using the algorithm of Israeli and Itai~\cite{II86} via $\ell_0$-samplers. 
Hence the algorithm for the first stage can be implemented using  $\tilde{O}(n/k)$ passes with $\tilde{O}(n)$ space.

\paragraph{Second Stage.}
At the beginning of the second stage,  consider the instance of  $\maximalpaths$ that replaces $\Pin$ by $\Pactive$ and only consider the nodes that are not $\dead$ yet. 
Goldberg, Plotkin, and Vaidya~\cite{GPV93} showed that a legal solution $\Pout'$ of this instance of  $\maximalpaths$ combined with the partial solution $\Pout$ found during the first stage form a legal solution to the original $\maximalpaths$ instance.

To find $\Pout'$, the approach taken by Goldberg, Plotkin, and Vaidya~\cite{GPV93} is to simply process each active path $P \in \Pactive$ sequentially. Specifically, when $P = (u_1, u_2, \ldots, u_x)$ is processed, find the largest index $i^\ast$ such that $u_{i^\ast}$ is reachable to a sink via $\idle$ nodes. If such an index  $i^\ast$ exists, then select $P^\ast$ as any path that is an extension of this subpath $(u_1, u_2, \ldots, u_{i^\ast})$ to a sink. Then  $P^\ast$ is added to $\Pout'$, and all its nodes become $\dead$. 
By the choice of $i^\ast$, it is straightforward to see that the output  $\Pout'$ satisfies the maximality constraint.

Next, consider the implementation of the algorithm that processes the path  $P = (u_1, u_2, \ldots, u_x)$  in the streaming model. We show that the task of finding  the index $i^\ast$ and the path $P^\ast$ can be solved in a single pass with $\tilde{O}(n)$ space. Hence the algorithm for the second stage can be implemented using  $\tilde{O}(k)$ passes with $\tilde{O}(n)$ space, as there are less than $k$ paths needed to be processed.

For each $\idle$ node $v$ adjacent to the path $P$, let $L(v)$ be the maximum index $i$ such that $v$ is adjacent to the $i$th node $u_i$ of the path $P$. 
Note that $i^\ast$ is the maximum value of $L(v)$ such that  $v$ is reachable to a sink via $\idle$ nodes that maximizes $L(v)$,  and $i^\ast$ is undefined if and only if the no node in $P$ is reachable to a sink via $\idle$ nodes.

We find a spanning forest $T'$ of the graph $G_\idle$ induced by the set of $\idle$ nodes. Select $v$ as a node that is reachable to a sink via $\idle$ nodes that maximizes $L(v)$. If such a node  $v$ exists, let $P'$ be any  path connecting $v$ to a sink in $T'$. Then we select $P^\ast$ as the concatenation of $(u_1, u_2, \ldots, u_{i^\ast})$ and $P'$, where $i^\ast = L(v)$, and then the status of every node in $P^\ast$ is updated to $\dead$.

Computing the labels $L(v)$ can be done in a single pass with $\tilde{O}(n)$ space in a straightforward way in the insertion-only model; for the turnstile model, this can be done by a binary search in $O(\log n)$ passes with $\tilde{O}(n)$ space.
The computation of the spanning forest $T'$ is trivial for the insertion-only model; for the turnstile model, this can also be done in a single pass with $\tilde{O}(n)$ space~\cite{AhnGM12}.

\subsection{Batch Process \label{sect-dfs3}}
At this point, we know that the first stage costs $\tilde{O}(n/k)$ passes with $\tilde{O}(n)$ space, and the second stage costs $\tilde{O}(k)$ passes with $\tilde{O}(n)$ space. We  set $k = \tilde{\Theta}(\sqrt{n})$ to balance the two parts to obtain an $\tilde{O}(\sqrt{n})$-pass semi-streaming algorithm.

Next, we show that the number of passes of the second stage can be further reduced to $\tilde{O}(k/s)$ if we process the paths in $\Pactive$ in batches of size $s$, where $1 \leq s \leq k$ is any given parameter. This  enables a smooth tradeoff between the number of passes and the space usage.

Consider an iteration where these $s$ paths $\{P_1, P_2, \ldots, P_s\}$ are processed. As above, for each $\idle$ node $v$, we define $L_j(v)$ as the maximum index $i$ such that $v$ is adjacent to the $i$th node of the path $P_j$. If $v$ is not adjacent to the path $P_j$, then $L_j(v)$ is undefined.

\paragraph{Sparse Certificate.}
To implement one batch update in a space-efficient manner, our strategy is to find a sparse subgraph $G^\ast$  such that we are able to do all path extensions entirely in  $G^\ast$. 

We construct a strong $s$-VC certificate $G^\ast$ of the subgraph $G_{\idle}$ induced by $\idle$ nodes.
This certificate $G^\ast$ has the property that for any subset $I$ of $\idle$ nodes of size at most $s$, all nodes of $I$ are reachable to distinct sinks via node-disjoint paths in $G^\ast$ if and only if all nodes of $I$ are reachable to distinct sinks via node-disjoint paths using $\idle$ nodes in the original graph $G$.
To see this, we simply attach a super source $s^\ast$ to all nodes in $I$ and attach a super sink $t^\ast$ to all sinks. The fact that $G^\ast$ is a strong $s$-VC certificate of $G_{\idle}$ guarantees that the node-connectivity of the pair $(s^\ast,t^\ast)$ is the same in both $G_{\idle}$ and  $G^\ast$.

\paragraph{Feasible Vector.}
Given the sparse certificate $G^\ast$ and a set of paths $\{P_1, P_2, \ldots, P_s\}$, we say that a vector $(i_1, \ldots, i_y)$ with $1 \leq y \leq s$ is \emph{feasible} if there exists a set of node-disjoint paths $P_1, \ldots, P_y$ of $G^\ast$ such that the following is met.
\begin{itemize}
    \item If $i_j = \bot$, then $P_j = \emptyset$ is an empty path.
    \item If $i_j \neq \bot$, then  $P_j$ is a path starting at a node $v$ whose label $L_j(v)$ equals $i_j$, and ending at a sink.
\end{itemize}

Due to the fact that $G^\ast$ is a strong $s$-VC certificate of $G_{\idle}$, the definition of feasibility remains unchanged if $G^\ast$ is replaced by $G_{\idle}$.
For any given vector $(i_1, \ldots, i_y)$, its feasibility can be checked in polynomial time as follows. Start from the graph $G^\ast$. For each $j$ such that $i_j \neq \bot$, add a special node $s_j$ that is adjacent to all nodes $v$ with $L_j(v) = i_j$. Add a super-source $s^\ast$ adjacent to all $s_j$. Add a super-sink $t^\ast$ adjacent to all sinks. Then $(i_1, \ldots, i_y)$ is feasible if and only if the pair $(s^\ast,t^\ast)$ is $z$-node connected, where $z$ is the number of elements in the vector $(i_1, \ldots, i_y)$ that are not $\bot$.

\paragraph{Algorithm.}
We are in a position to describe the algorithm for batch processing the paths $\{P_1, P_2, \ldots, P_s\}$. 

We find a feasible vector $(i_1^\ast, \ldots, i_s^\ast)$ as follows. 
For the base case, $i_1^\ast$ is chosen as the maximum number such that $(i_1^\ast)$ is feasible. If such a number does not exist, then we set $i_1^\ast = \bot$.
Suppose that $(i_1^\ast, \ldots, i_{j-1}^\ast)$ have been found. Select $i_j^\ast$ as the maximum number such that $(i_1^\ast, \ldots, i_{j-1}^\ast, i_j^\ast)$ is feasible. If such a number does not exist, then we set $i_j^\ast = \bot$.

Let $(P_1^\ast, \ldots, P_{s}^\ast)$ be the set of node-disjoint paths that showcases the feasibility of $(i_1^\ast, \ldots,   i_s^\ast)$. For $j = 1, \ldots, s$, if $P_j^\ast \neq \emptyset$, we extend the length-$i_j^\ast$ prefix of the path $P_j$ by concatenating it with $P_j^\ast$, and add the resulting path to the set of output paths $\Pout'$.

After processing a batch, the status of all nodes in the output paths are updated to $\dead$.

\paragraph{Correctness.}
Now we argue that the output $\Pout'$ is a legal solution to the $\maximalpaths$ problem of the second stage. Intuitively, the correctness is due to the fact that $G^\ast$ is a strong $s$-VC certificate of $G_{\idle}$ and the fact that we construct the feasible vector $(i_1^\ast, \ldots, i_{s}^\ast)$  in such a way that mimics the sequential algorithm of Goldberg, Plotkin, and Vaidya~\cite{GPV93} that processes the paths one-by-one.

Formally, suppose that the output $\Pout'$ is not a legal solution, i.e., the maximality constraint is violated. Then there is some node  $u$ in some input path $P$ such that $u$ is reachable to a sink via a path  that is node-disjoint to all paths in $\Pout'$.

Let $P$ be the $j$th path in its batch $\{P_1, P_2, \ldots, P_s\}$, and let $u$ be the $z$th node of $P$. Since $u$ is not in any output path, there are two possibilities: either $i_j^\ast = \bot$ or $i_j^\ast < z$. Both possibilities are not possible, because $(i_1^\ast, \ldots, i_{j-1}^\ast, z)$ must be a feasible vector, as $u$ is reachable to a sink via a path using only $\idle$ nodes not in any path of $\Pout'$. Therefore,  we must have $i_j^\ast \neq \bot$ and $i_j^\ast \geq z$ according to our algorithm for constructing $(i_1^\ast, \ldots, i_{s}^\ast)$.

\paragraph{Space and Pass Complexity.}
The cost for constructing the labels $L_j(v)$ for all $\idle$ nodes $v$ and for all $1 \leq j \leq s$ is $\tilde{O}(1)$ passes and $\tilde{O}(ns)$ space.

For the construction of the {strong $s$-VC certificate} $G^\ast$,
remember that such a sparse subgraph of $O(ns)$ edges can be computed in a single pass with $\tilde{O}(ns)$ space \emph{deterministically} in the insertion-only model~\cite{EppsteinGIN97}. In the turnstile model,  such a sparse subgraph of $\tilde{O}(ns^2)$ edges can be computed with high probability in a single pass with $\tilde{O}(ns^2)$ space~\cite{GuhaMT15}.

\paragraph{Summary.} The first stage of the algorithm for $\maximalpaths$ costs $\tilde{O}(n/k)$ passes with $\tilde{O}(n)$ space. With batch processing, the second stage  of the algorithm for $\maximalpaths$ costs $\tilde{O}(k/s)$ passes.  Remember that the number of active paths at the beginning of the second phase is less than $k$, and they are processed in batches of size $s$. Since each iteration costs $\tilde{O}(1)$ passes, the number of passes is $\tilde{O}(k/s)$. The space usage for the second stage is $\tilde{O}(ns)$ for the insertion-only model, and is $\tilde{O}(ns^2)$ for the turnstile model.

The cost for solving $\maximalpaths$  is the bottleneck of the DFS algorithm in the sense that the rest of the DFS algorithm can be implemented with just $\tilde{O}(1)$ passes and $\tilde{O}(n)$ space. Hence we have the following results for the complexity of streaming DFS. For any parameters $1 \leq s \leq k \leq n$, there is a deterministic algorithm using $\tilde{O}((n/k) + (k/s))$ passes and $\tilde{O}(ns)$ space in the insertion-only model, and there is a randomized algorithm using $\tilde{O}((n/k) + (k/s))$ passes and $\tilde{O}(n s^2)$ space in the turnstile model. We conclude the proof of Theorem~\ref{thm:main-dfs}.
\section{Single-Pass Lower Bounds} \label{sec:lower}

In this section, we use the lower bound of the 1-way randomized communication complexity for the \textsc{Index} problem~\cite{Ablayev96} to show the single-pass space lower bound for computing approximate MLST to within an additive error $k$. This gives a complementary result for \cref{thm:main-mlst}.

\begin{theorem}\label{thm:lower-mlst}

In the insertion-only model, given a connected $n$-node simple undirected graph $G$, computing a spanning tree of $G$ that has at least $\leaf(G)-k$ leaves for any $k \in [1, (n-5)/4]$ requires $\Omega(n^2/k^2)$ bits for any single-pass randomized streaming algorithm that can succeed with probability at least $2/3$.
\end{theorem}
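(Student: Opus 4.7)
I plan to prove the single-pass lower bound by reducing from the one-way randomized communication complexity of INDEX.  Recall INDEX$_N$: Alice holds $x \in \{0,1\}^N$, Bob holds an index $i \in [N]$, and Bob must output $x_i$ with probability at least $2/3$; Ablayev~\cite{Ablayev96} showed that this requires $\Omega(N)$ bits of one-way randomized communication.  The plan is to set $N = t^2$ with $t = \lfloor (n-5)/(4k) \rfloor = \Theta(n/k)$, so that $N = \Theta(n^2/k^2)$, view Alice's string as a matrix $x \in \{0,1\}^{t \times t}$, and view Bob's index as a pair $(i^*,j^*) \in [t]^2$.

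From a pair $(x,(i^*,j^*))$ I will build on the fly a connected graph $G$ on $n$ vertices whose edges are emitted first by Alice (depending only on $x$) and then by Bob (depending only on $(i^*,j^*)$), arranged so that $\leaf(G)$ takes one of two values $\ell_0, \ell_1$ with $\ell_1 - \ell_0 > k$ according as $x_{i^*,j^*}$ is $0$ or $1$.  Given any spanning tree $T$ of $G$ with $\leaf(T) \ge \leaf(G) - k$, Bob then recovers $x_{i^*,j^*}$ by comparing $\leaf(T)$ to a fixed threshold between $\ell_0$ and $\ell_1 - k$.  Since the state carried from the Alice phase to the Bob phase is at most the working memory of the streaming algorithm, the algorithm's space complexity is at least the $\Omega(N) = \Omega(n^2/k^2)$ bound of Ablayev.

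The vertex set of $G$ consists of $t$ ``row'' vertices $R_1,\dots,R_t$, $t$ ``column'' vertices $C_1,\dots,C_t$, and roughly $4kt+5-2t$ auxiliary vertices organized into a backbone path plus pendant clusters, chosen so that the total is $n$ (which is what fixes $t = \lfloor (n-5)/(4k) \rfloor$).  Alice emits the edge $R_iC_j$ for every $(i,j)$ with $x_{i,j}=1$, together with the fixed connectivity edges of the backbone that are independent of $x$.  Bob, knowing $(i^*,j^*)$, emits a gadget that attaches $k$ dedicated pendants to each of $R_{i^*}$ and $C_{j^*}$ (forcing both of these to be internal in every spanning tree) and arranges the backbone so that a distinguished ``collector'' vertex is adjacent to a cluster of roughly $k+1$ backbone vertices which can all simultaneously be realized as tree-leaves precisely when the potential direct edge $R_{i^*}C_{j^*}$ is available.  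In particular, when $x_{i^*,j^*}=1$ the direct edge lets a spanning tree attach the whole cluster as leaves of the collector, gaining $\Theta(k)$ leaves; when $x_{i^*,j^*}=0$, every spanning tree must route through the backbone to connect $R_{i^*}$ to $C_{j^*}$, forcing at least $k+1$ of those cluster vertices to become internal.

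The hard part will be exhibiting the amplifying gadget rigorously and showing that the gap of more than $k$ leaves really does hold.  I will need to verify two things simultaneously: (i) the collector construction gains strictly more than $k$ additional leaves exactly in the case $x_{i^*,j^*}=1$, via a ``pendant-swap'' argument showing that in the $x_{i^*,j^*}=0$ case any spanning tree pays at least $k+1$ internal vertices to reconnect $R_{i^*}$ and $C_{j^*}$ through the backbone; and (ii) the contributions of the non-queried rows $R_i$ ($i \neq i^*$) and columns $C_j$ ($j \neq j^*$) to $\leaf(G)$ are uniform in $x$, so that other entries of the matrix cannot shift the approximation across the threshold.  Pinning down these invariants is what forces the parameter $t = \lfloor (n-5)/(4k) \rfloor$ and the range $k \in [1,(n-5)/4]$, and consequently yields the $\Omega(n^2/k^2)$ lower bound.
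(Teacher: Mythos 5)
Your high-level plan — a one-way reduction from \textsc{Index} on $\Theta((n/k)^2)$ bits, with Alice encoding a bipartite graph on $\Theta(n/k)$ vertices and Bob adding a gadget that creates a leaf-count gap of more than $k$ — is exactly the strategy the paper uses, and it is the right one. However, the heart of the proof is the amplifying gadget, and you have not actually constructed it: you describe a ``collector'' vertex, a ``backbone,'' and a cluster of $\approx k+1$ vertices, and then explicitly defer the two properties that make or break the argument to future verification. As written there is no concrete graph whose $\leaf(\cdot)$ you can compute, so the reduction does not yet prove anything.

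Beyond being incomplete, the sketch raises real concerns about whether this particular gadget can be made to work. Your claim that ``when $x_{i^*,j^*}=0$, every spanning tree must route through the backbone to connect $R_{i^*}$ to $C_{j^*}$'' is in tension with the fact that Alice's bipartite graph generally contains many other edges: a spanning tree could reach $C_{j^*}$ via $R_{i^*}\!-\!C_{j'}\!-\!R_{i'}\!-\!C_{j^*}$ without touching the backbone or the collector cluster at all, potentially collapsing your $\Theta(k)$ gap to $O(1)$. Similarly, property (ii), that the contributions of the non-queried rows and columns to $\leaf(G)$ are ``uniform in $x$,'' is a strong structural invariant that is not obvious and is not argued. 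The paper avoids both difficulties with a much simpler amplification: it first builds a small base gadget (bipartite graph plus three extra nodes $s,t,\ell$, with $s$ adjacent to essentially everything) whose maximum leaf count changes by exactly $1$ depending on the queried bit, robustly with respect to all other bits; it then takes $k+1$ disjoint copies of this gadget sharing only the hub $s$. The leaf-count gap is then exactly $k+1 > k$ by construction, with no delicate ``collector'' analysis needed. If you want to salvage your route, you would need to pin the backbone/collector down concretely and prove both invariants; otherwise I'd recommend switching to the replication trick, which gives the gap for free.
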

\begin{proof}

We begin with a reduction from an $n^2$-bit instance of the \textsc{Index} problem to computing a spanning tree of $(2n+3)$-node graph $G$ with $\leaf(G)$ leaves for any $n \ge 1$. Given Alice's input in the \textsc{Index} problem, i.e. a bit-array of length $n^2$, we construct an $n$ by $n$ bipartite graph $H$, as part of $G$, in which edge $(x_i, y_j)$ for every $i, j \in [1, n]$ corresponds to the $((i-1)n+j)$-th bit in Alice's array. Then, given Bob's input, a tuple $(i, j)$  for some $i, j \in [1, n]$, we construct the remaining part of $G$ by adding three additional nodes $s, t$, and $\ell$, and 

\begin{itemize}
    \item connecting an edge from $s$ to $z$ for every node $z \ne y_j$ in $H$, and
    \item adding edge $(\ell, x_i)$, $(s, t)$, and $(t, y_j)$.
\end{itemize}

It clear that $G$ is connected and has 
$$
\leaf(G) = \left\{\begin{array}{ll}
2n+1 & \mbox{ if } (x_i, y_j) \in H \\
2n & \mbox{ otherwise} 
\end{array}
\right.
$$ 

Thus, having a single-pass streaming algorithm to compute $\leaf(G)$ suffices to decide the $n^2$-bit instance of the \textsc{Index} problem, i.e. for Bob to tell what the $((i-1)n+j)$-th bit in Alice's array is. This requires $\Omega(n^2)$ bits. To obtain the hardness result for MLST with additive error $k$ for any $k \ge 1$, one can duplicate $H \cup \{\ell, t\}$ into $(k+1)$ copies and let the copies share the same $s$, so $G$ is connected, has $(k+1)(2n+2)+1$ nodes, and has

$$
\leaf(G) = \left\{\begin{array}{ll}
(2n+1)(k+1) & \mbox{ if } (x_i, y_j) \in H \\
2n(k+1) & \mbox{ otherwise} 
\end{array}
\right.
$$

Hence, having a single-pass streaming algorithm to compute $\leaf(G)$ for $G$ of $(k+1)(2n+2)+1$ nodes to within an additive error $k$ suffices to decide the $n^2$-bit \textsc{Index} problem. Replace $(k+1)(2n+2)+1 = n'$ and $n^2 = \Omega((n'/k)^2)$ yields the desired bound.
\end{proof}

\section{Conclusion}

In this paper, we devised semi-streaming algorithms for spanning tree computations, including max-leaf spanning trees, BFS trees, and DFS trees. For max-leaf spanning trees, despite that any streaming algorithm requires $\Omega(n^2)$ space to compute the exact solution, we show how to compute a $(1+\varepsilon)$-approximation using a single pass and $\tilde{O}(n)$ space, albeit in super-polynomial time. For BFS trees and DFS trees, we show how to compute them using $O(\sqrt{n})$ passes and $\tilde{O}(n)$ space, and offer a smooth tradeoff between pass complexity and space usage.

The pass complexities of our algorithms for BFS trees and DFS trees are still far from the known lower bounds, $\omega(1)$ passes for BFS trees~\cite{Feigenbaum08} and the trivial 1 pass for DFS trees. It is unclear whether our upper bounds can be further reduced or the known lower bounds can be improved. We leave closing the gap to future work.


\bibliographystyle{plainurl}

\input{main.ref}

\end{document}